\documentclass[11pt]{article}
\usepackage[hmargin=1in,vmargin=1in]{geometry}
\usepackage{hchang}
\nolinenumbers

\usepackage{thm-restate}

\usepackage{comment}
\usepackage{tcolorbox}

\newtheorem{theorem}{Theorem}[section]
\newtheorem{lemma}[theorem]{Lemma}
\newtheorem{claim}[theorem]{Claim}
\newtheorem{observation}[theorem]{Observation}

\newtheorem{remark}[theorem]{Remark}
\newtheorem{corollary}[theorem]{Corollary}

\newtheorem{definition}[theorem]{Definition}
\newtheorem{question}[theorem]{Question}
\newtheorem{invariant}[theorem]{Invariant}

\newcommand{\cX}{\mathcal{X}}
\newcommand{\cN}{\mathcal{N}}
\newcommand{\dist}{\ensuremath{\delta}}
\newcommand{\ddim}{\ensuremath{d}}

\def\eps{\varepsilon}

\newcommand{\DS}{\texttt{DS}}
\newcommand{\points}{{\ensuremath{\DS.X}}}
\newcommand{\sparse}{{\ensuremath{\DS.S}}}
\newcommand{\net}{{\ensuremath{\DS.\cN}}}
\newcommand{\light}{{\ensuremath{\DS.L}}}
\newcommand{\len}{\ensuremath{\operatorname{len}}}

\newcommand{\lighti}[1]{\ensuremath{L[i]}}

\title{Dynamic Light Spanners in Doubling Metrics}
\date{}

\author{%
Sujoy Bhore%
\thanks{Department of Computer Science \& Engineering, Indian Institute of Technology Bombay. Email: \texttt{sujoy@cse.iitb.ac.in}. Work supported in part by ANRF ARG-MATRICS, Grant 002465.
}
\and
Jonathan Conroy%
\thanks{Department of Computer Science, Dartmouth College. Email: {\tt jonathan.conroy.gr@dartmouth.edu}. Supported by the U.S.\ National Science Foundation Grant No.\ CCF-2443017.
}
\and
Arnold Filtser%
\thanks{Department of Computer Science, Bar-Ilan University. Email: \texttt{arnold.filtser@biu.ac.il}. This research was supported by the ISRAEL SCIENCE FOUNDATION (grant No.\ 1042/22).
}}

\begin{document}

\maketitle

\begin{abstract}
A $t$-spanner of a point set $X$ in a metric space $(\mathcal{X}, \delta)$ is a graph $G$ with vertex set $P$ such that, for any pair of points $u,v \in X$, the distance between $u$ and $v$ in $G$ is at most $t$ times $\delta(u,v)$. We study the problem of maintaining a spanner for a dynamic point set $X$---that is, when $X$ undergoes a sequence of insertions and deletions---in a metric space of constant doubling dimension. For any constant $\varepsilon>0$, we maintain a $(1+\varepsilon)$-spanner of $P$ whose total weight remains within a constant factor of the weight of the minimum spanning tree of $X$. Each update (insertion or deletion) can be performed in $\operatorname{poly}(\log \Phi)$ time, where $\Phi$ denotes the aspect ratio of $X$. Prior to our work, no efficient dynamic algorithm for maintaining a light-weight spanner was known even for point sets in low-dimensional Euclidean space.
\end{abstract}

\section{Introduction}
Let $X\subset \cX$ be a set of points in the metric $(\cX,\delta_X)$. A graph $G= (X,E,\delta)$ with vertex set $X$ is a \EMPH{geometric graph} if the weight function on the edges is the distance function $\delta$. In other words, the weight of each edge is the distance between its endpoints in $\cX$. Observe that if $G$ is a geometric graph, then by the triangle inequality, for every $x,y\in X$, $\delta_G(x,y)\geq \delta(x,y)$.  Given a parameter $t$, called the \EMPH{stretch}, we say that  $G$ is a (geometric) \EMPH{$t$-spanner} for $X$ if $\dist_G(x,y)\leq t\cdot \delta(x,y)$ for every $x,y\in X$. The natural goal is to construct, for a given stretch $t$, a $t$-spanner with small number of edges and a small total weight.

\EMPH{Lightness} and \EMPH{sparsity} are two central measures for evaluating spanners. For a spanner $G=(X,E)$, the lightness is the ratio $w(G)/w(\mathrm{MST})$ between the total weight of $G$ and the weight of a minimum spanning tree on $X$. The sparsity of $G$ is the ratio $|E(G)|/|E(\mathrm{MST})| \approx |E(G)|/|X|$, comparing the number of edges of $G$ to that of an MST. 
Since every spanner (with finite stretch) is connected and therefore contains a spanning tree, the lightness and sparsity of $G$ 
constitute a measure of how close a spanner (specifically $w(G)$ and $|E(G)|$) to the optimal weight and the optimal number of edges of a spanner with finite stretch.
A long line of research has studied spanners in Euclidean spaces~\cite{arya1995euclidean, rao1998approximating,HarPeledIS13,
 ElkinS15,FiltserN22,BhoreT22,LeS25}, doubling~\cite{gao2004deformable,chan2009small,smid2009weak, gottlieb2015light, FS20, BLW19, BhoreM25}, planar and minor-free metric~\cite{borradaile2017minor, grigni2002light, cohen2020light, BhoreKK0LPT25}, and general graphs~\cite{peleg1989graph,althofer1993sparse,ElkinS16,ElkinNS15,chechik2018near,FS20,LeS23,Bodwin24}. 
Of particular interest, we focus on \emph{low-dimensional Euclidean space}, and more generally on \emph{doubling metrics}.

\paragraph{Doubling metrics.} Let $B(x,r) = \{y | \delta(x,y)\leq r\}$ be a ball of radius $r\geq 0$ centered at a point $x\in \cX$. A metric $(\cX,\delta)$ has \EMPH{doubling dimension $\ddim$} if for every $r > 0$, every ball of radius $r$ can be covered by at most $2^{\ddim}$ balls of radius $r/2$: $\forall x\in X, r > 0,~\exists Y\subseteq X: |Y|\leq 2^{\ddim} \text{ and } B_X(x,r) \subseteq \cup_{y\in Y} B_X(y,r/2)$. 
The study of spanners in doubling metrics was initiated by Gao, Guibas, and Nguyen~\cite{gao2004deformable}, who proved that any $n$-point set in a metric space of doubling dimension $\ddim$ admits a $(1+\varepsilon)$-spanner with $\varepsilon^{-O(\ddim)} n$ edges.
Smid~\cite{smid2009weak}, via an analysis of the \EMPH{greedy algorithm}, showed that greedy $(1+\varepsilon)$-spanners have $\varepsilon^{-O(\ddim)} n$ edges and achieve lightness $\varepsilon^{-O(\ddim)} \log n$. While it has long been known that $d$-dimensional Euclidean point sets admit $(1+\varepsilon)$-spanners with lightness $\varepsilon^{-O(d)}$, it remained a major open problem whether an analogous result holds for metric spaces of low doubling dimension. 
In a major breakthrough, Gottlieb~\cite{gottlieb2015light} proved that any metric space of doubling dimension $\ddim$ admits a $(1+\varepsilon)$-spanner with lightness $(\ddim/\varepsilon)^{O(\ddim)}$. 
Gottlieb devised a complicated spanner construction instead of the classic greedy spanner.
Later, Filtser and Solomon \cite{FS20} showed that the greedy spanner is existentially optimal, concluding that it has lightness $(\ddim/\varepsilon)^{O(\ddim)}$ as well.
Finally, Borradaile, Le, and Wulff-Nilsen \cite{BLW19} showed the greedy spanner has
weight $\varepsilon^{-O(\ddim)}$, which is optimal.

\paragraph{Dynamic Spanners.} The problem of maintaining a sparse $(1+\e)$-spanner for a point set $X$ under insertions and deletions has also been studied over the years. Already back in 1999, Arya et al.~\cite{arya1999dynamic} obtained polylogarithmic update time in a restricted model with random updates
(a deletion removes a uniformly random point from $S$, and an insertion results in a uniformly random point in the updated set).
Later, Bose et al.~\cite{bose2004ordered} studied the problem in the context of $\theta$-graphs, and presented an algorithm that supports insertions only and achieves polylogarithmic update time. 
Gao et al.~\cite{gao2004deformable} studied kinetic and dynamic geometric spanners for points in bounded doubling dimensions, and obtained a fully dynamic spanner 
with worst-case update time $O(\log \Phi)$, where $\Phi$ is the 
\EMPH{aspect ratio} of the point set $X$, defined as the ratio between the maximum and minimum pairwise distances in $X$. Krauthgamer and Lee~\cite{KL04} obtained similar results for proximity search, where the time bounds of their dynamic updates depend also in the aspect ratio. For proximity search, later, Cole and Gottlieb~\cite{cole2006searching} removed the dependency on the aspect ratio. For the Euclidean plane, Abam et al.~\cite{abam2008simple} presented an algorithm with $O(\log n)$ update time. 
For doubling metrics, Roditty~\cite{roditty2012fully} presented the first fully dynamic algorithm for maintaining a geometric $(1+\varepsilon)$-spanner, with update time depending only on the number of points in $X$. The algorithm obtained in~\cite{roditty2012fully} supports insertions in amortized $O(\log n)$ time and deletions in amortized $\tilde{O}(n^{1/3})$ time
where $\tilde{O}$ hides polylogarithmic factors, 
and maintains spanner with $O(n / \varepsilon^{\ddim})$ edges.
Gotllieb and Roditty~\cite{GottliebR08}
gave a dynamic spanner that supports insertions in \(O\!\left(\varepsilon^{-O(\ddim)} \log^{2} n\right)\) amortized time and deletions in \(O\!\left(\varepsilon^{-O(\ddim)} \log^{3} n\right)\) amortized time.
Later, in~\cite{gottlieb2008optimal},
\footnote{Gotllieb and Roditty \cite{gottlieb2008optimal} additionally assume that they can query the distance between a deleted point and a non-deleted point, which (for example) is possible for Euclidean metrics.}
they gave a \((1+\varepsilon)\)-spanner of constant degree that supports updates in $\varepsilon^{-O(\ddim)} \log n$ worst-case time.

Roughly speaking, these ``classic'' constructions of dynamic spanners for doubling metric are based on a \EMPH{net tree} (see Section~\ref{sec:NetTreesGreedySpanner}). 
While it is straightforward to construct a sparse spanner from a net tree, this tool is too rough to obtain lightness.
Indeed, in Figure~\ref{fig:NetTreeHeavy} we illustrate the net tree of the path graph, and show that it has lightness $\Omega(\log n)$.
    \begin{figure}[t]
	\begin{center}
		\includegraphics[width=0.8\textwidth]{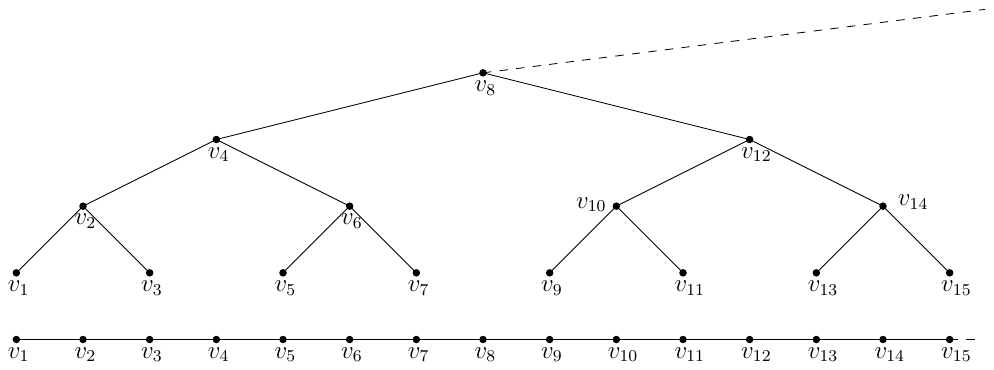}
		\caption{
        On bottom: The path graph $P_n$. All edges are unit weight. 
        On top: 
        a depiction of the net tree $T$ of $P_n$. There is a laminar hierarchy of nets; the $2^i$-net is $N_i=\{v_j\in P_n\mid j\mod 2^i=0\}$ (all vertices at hight $i$ in the tree and above).
        The net-tree spanner contains all the tree edges, and many additional edges.
        The weight of the edge $\{v_i,v_j\}$ is $|i-j|$.
        The tree $T$ has depth $\log n$, and the total weight of the edges going from depth $i$ vertices to depth $i+1$ is $\Theta(n)$.
        Thus $w(T)=\Theta(n\cdot\log n)$.
        As the weight of the MST of $P_n$ is $n-1$, the net-tree spanner has lightness $\Omega(\log n)$.
		}\label{fig:NetTreeHeavy}
	\end{center}
\end{figure}
A more recent approach to constructing dynamic sparse spanners is using \EMPH{locality-sensitive orderings (LSO)} (introduced by  Chan, Har-Peled, and Jones~\cite{chan2020locality}). In Section~\ref{sec:LSO}, we summarize the history of LSO, and prove that LSO-based spanners have logarithmic lightness (even for $\R^1$).

While there is a substantial body of work on dynamic spanners in Euclidean metrics and, more generally, in doubling metrics, as summarized above, there is comparatively little work on dynamic \emph{light} spanners. Controlling lightness is arguably a significantly harder objective, as evidenced by the extensive literature in the offline setting (see, e.g.,~\cite{LeS25, arya1995euclidean, rao1998approximating, borradaile2017minor, LeS23, NS07, BhoreM25, FS20, gottlieb2015light, ElkinS15, BhoreT22,CCLST25} and the references therein).

\begin{question}
\label{q:main}
Can we design a dynamic algorithm that maintains a \EMPH{light-weight} spanner for a point set under dynamic updates in doubling metrics with polylogarithmic update time?
\end{question}

Partial progress toward Question~\ref{q:main} was made by Eppstein and Khodabandeh~\cite{EK24}, who showed that one can maintain a \EMPH{low-recourse} light spanner for dynamic point sets in low-dimensional Euclidean space. The \emph{recourse} of a dynamic spanner is the number of edges that change after each insertion or deletion. They obtained a construction with amortized $O(1)$ recourse per insertion and amortized $O(\log \Phi)$ recourse per deletion. However, their result addresses recourse rather than efficiency: the actual time to update the spanner can be very large (no faster than just recomputing a spanner from scratch), and therefore their work does not resolve the question of achieving polylogarithmic (or even sublinear) update time.

\subsection{Our Contribution}

We resolve Question~\ref{q:main} in the affirmative. We say a set of points $X$ is $(a,b)$-bounded if $\min_{u,v \in X} \dist(u,v) \ge a$ and $\max_{u,v \in X} \dist(u,v) \le b$.
\begin{theorem}
\label{thm:main}
    Let $(\cX, \dist)$ be a metric space with doubling dimension $\ddim$, and let $\e >0$ and $\Phi > 0$ be parameters. Let $X \subseteq \cX$ be a set of points undergoing insertions and deletions, such that at all times $X$ is $(1, \Phi)$-bounded. We maintain a $(1+\e)$-spanner $L$ of $X$ with lightness at most $\e^{-O(\ddim)}$, where each insertion/deletion to $X$ can be processed in time $(\e^{-1} \cdot \log \Phi)^{O(\ddim)}$.
\end{theorem}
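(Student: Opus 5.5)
Our plan is to dynamize a hierarchical light-spanner construction, in the style of the net-tree approach of Section~\ref{sec:NetTreesGreedySpanner} combined with the greedy-type charging arguments of \cite{BLW19,FS20}. The design goal is that each update touches only $(\e^{-1}\log\Phi)^{O(\ddim)}$ objects.

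\textbf{Net tree and candidate edges.} We maintain a hierarchy of $2^i$-nets $N_0\supseteq N_1\supseteq\dots\supseteq N_{\log\Phi}$ of $X$, for $i=0,\dots,\log\Phi$. Using the standard dynamic net-tree techniques~\cite{gao2004deformable,KL04}, each insertion or deletion changes only $2^{O(\ddim)}$ net points per level. At each level $i$ we define the candidate edge set
\[
C_i=\{(u,v): u,v\in N_i,\ 2^i/\e\le \dist(u,v)< 2^{i+1}/\e\}.
\]
By the packing bound, every net point has $\e^{-O(\ddim)}$ candidate partners. It is standard that the union of the $C_i$ is a $(1+O(\e))$-spanner, but it is not light (Figure~\ref{fig:NetTreeHeavy}). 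The spanner $L$ will therefore keep only a subset $L[i]\subseteq C_i$ at each level.

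\textbf{Greedy-like filtering.} We add a candidate edge $(u,v)\in C_i$ to $L[i]$ only if it is not already approximately served. Concretely, we approximate $\dist_{L[<i]}$ between $u$ and $v$ by a local search in a sparse auxiliary graph: contract the clusters of level $i-\log(1/\e)$ and run Dijkstra restricted to the ball $B(u,O(2^i/\e))$. That ball contains only $\e^{-O(\ddim)}$ clusters, so the search is cheap. This mimics the greedy spanner processed level by level. The lightness bound $\e^{-O(\ddim)}$ should follow from the local-greedy analysis: edges kept at level $i$ have $\Omega(\e)$-separated endpoints after contraction, so the weight of level $i$ can be charged to the MST edges inside the relevant clusters via the potential argument of \cite{BLW19}, or via existential optimality~\cite{FS20}. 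Stretch follows by induction on levels. For a pair at distance about $2^i/\e$, either the candidate edge between their net representatives is kept, or it was rejected because a $(1+\e)$-path already exists. The detour to the representatives costs $O(2^i)=O(\e)\dist(u,v)$. Summing over levels adds $O(\e)$ stretch after rescaling $\e$.

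\textbf{Handling updates, the main obstacle.} The filtering decision for an edge at level $i$ depends on $L[<i]$, so a single update could cascade through all higher levels. We would try to bound this cascade by making each decision depend only on a local neighborhood. Changing a level-$j$ edge affects decisions only for level-$i$ candidates within distance $O(2^i/\e)$ of it. Packing then gives $\e^{-O(\ddim)}$ affected candidates per level per changed edge. Naively these counts multiply across levels, giving $\e^{-O(\ddim\log\Phi)}$ work, which is far too much.

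To avoid this blow-up, we would relax the filtering rule with slack. Keep an edge if the local distance exceeds $(1+\e)\dist(u,v)$. Delete an already-kept edge only if the local distance falls below $(1+\e/2)\dist(u,v)$, with the slacks scaled per level, say $\e/\log\Phi$. With this hysteresis, a change at level $j$ alters distances at level $i>j$ only by amounts that rarely cross the thresholds. The affected set then grows additively rather than multiplicatively: $(\e^{-1}\log\Phi)^{O(\ddim)}$ re-evaluations per update. Each re-evaluation is a local Dijkstra of size $(\e^{-1}\log\Phi)^{O(\ddim)}$, which matches the claimed running time.

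Making this hysteresis argument rigorous is the step we expect to be hardest. We need to check both that the hysteresis preserves lightness and that it yields the worst-case, not merely amortized, bound. If worst-case control fails, we would fall back on periodically rebuilding each level lazily.
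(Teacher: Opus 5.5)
\textbf{The cascade.} The step you flag as hardest is left unproved, and hysteresis is not the right tool for it. A single edge change can move $\delta_{L[<i]}(u,v)$ by an arbitrary amount, so ``rarely crosses the thresholds'' supports at best an amortized claim. Lazily rebuilding a level costs time proportional to its size.

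The missing idea is that the affected regions are nested balls around $x$, so their sizes add rather than multiply. Your filter already depends only on strictly lower levels; this is the paper's ``delayed greedy'' invariant, whose purpose is exactly to stop sideways cascades inside a level. So after updating $x$ you may recompute, at level $i$, only the candidates with both endpoints in $B(x,4\ell_i)$. Here $\ell_i$ bounds the level-$i$ candidate lengths, and $\ell_{i-1}=\ell_i/2$. Net changes at level $i$ lie within $2^i$ of $x$, so every new or vanished candidate is among these.

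Inductively, all changes at levels below $i$ lie in $B(x,2\ell_i)$. If $u\notin B(x,4\ell_i)$, every path from $u$ of length less than $2\ell_i$ misses that ball. Hence $\delta_{L[<i]}(u,v)$ is either unchanged, or at least $2\ell_i>2\delta(u,v)$ both before and after the update. The filter only compares against thresholds below $2\delta(u,v)$, so the old decision stays valid. Each level therefore touches only the $\varepsilon^{-O(d)}$ candidates in one ball of radius $O(\ell_i)$, however many lower-level edges changed. This gives worst-case recourse $\varepsilon^{-O(d)}\log\Phi$ with no hysteresis at all.

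\textbf{Computing the filter.} Your contracted local Dijkstra does not approximate $\delta_{L[<i]}(u,v)$. Without contraction the search is not cheap, since $B(u,O(2^i/\varepsilon))$ can hold $\operatorname{poly}(\Phi)$ points and small edges. With clusters of spacing $\varepsilon 2^i$ contracted, you get only a lower bound with no $(1+\varepsilon)$ guarantee, because traversing a cluster becomes free. For example, suppose the only $u$--$v$ connection in $L[<i]$ is a dense detour of length $3\delta(u,v)$ through many clusters. Its contracted length can be far below $\delta(u,v)$, so you reject $(u,v)$, and your stretch step (``rejected because a $(1+\varepsilon)$-path already exists'') fails.

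You need a two-sided estimate that is maintained under updates, and this is the paper's main technical contribution. It stores estimates of $\delta_L$ for lower-scale pairs of a finer net with spacing $\varepsilon_{\rm small}2^i$, where $\varepsilon_{\rm small}=\Theta(\varepsilon/\log\Phi)$. It then estimates $\delta_{L[<i]}(u,v)$ in a sketch graph made of the actual scale-$(i-1)$ and $(i-2)$ edges of $L$, plus short net pairs weighted by the stored estimates. Snapping to the finer net costs an additive $O(\varepsilon_{\rm small}2^i)$ against a path of length $\Omega(2^i)$. So the approximation factor degrades by only $O(\varepsilon_{\rm small})$ per scale and stays below $1+\varepsilon/3$. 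The gap between the two invariant thresholds, $1+\varepsilon$ and $1+\varepsilon/3$, absorbs this error. The stored estimates need refreshing only near $x$, by the same nested-ball argument. This, not a hysteresis slack, is where the $(\varepsilon^{-1}\log\Phi)^{O(d)}$ running time comes from.

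Your lightness argument is also only gestured at. Once kept edges satisfy $\delta_{L[<i]}(u,v)>(1+\varepsilon/3)\delta(u,v)$, the clean route is to charge each edge of $L$ outside the greedy spanner of $L$ to a same-scale greedy edge on its short detour, and then apply packing.
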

For the sake of simplicity, we state our theorem as though the maximum distance $\Phi$ must be given in advance, and the minimum distance between points in $X$ must be at least 1. Using standard techniques \cite{KL04,EK24} we could remove these restrictions at no loss in the running time: we obtain a data structure whose update time depends adaptively on the aspect ratio $\Phi$ of the current point set $X$. See Remark~\ref{rem:aspect-ratio}. 
Additionally, we remark that we do not need to know the doubling dimension $\ddim$ in advance; the value $\ddim$ only appears in the runtime analysis.

\paragraph{Techniques.} Our proof begin somewhat similarly to Eppstein and Khodabandeh \cite{EK24}: roughly speaking, we seek to maintain a \emph{greedy spanner} on top of a \emph{dynamic sparse spanner} of points $X$ (crucially, this spanner is much more structured than just the greedy spanner on the complete geometric graph on $X$). In Section~\ref{S:recourse}, we first give a new (and arguably much simpler\footnote{We avoid the \emph{bucketing} technique in the algorithm of \cite{EK24} and instead work with all spanner edges at once; we maintain spanner invariants that work in doubling metrics and not just Euclidean metrics (unlike the \emph{leapfrog property} used in \cite{EK24}); and finally, we give a simple worst-case bound on the recourse of $O(\log \Phi)$, rather than using a potential function to bound the amortized recourse.}) algorithm and analysis that such a spanner can be maintained with bounded recourse.
After each insertion/deletion, we identify $O(\log \Phi)$ edges $(u,v)$ of our greedy spanner $L$ which might have to change, to satisfy the greedy spanner invariants: roughly speaking, we examine the distance $\dist_{L<}(u,v)$ between $u$ and $v$ in the graph consisting of all of edges of $L$ that are shorter than $(u,v)$\footnote{Our description here assumes we want to maintain a greedy spanner, but actually we maintain a slightly more robust type of spanner which we call the \emph{delayed greedy spanner}: we work with distances $\dist^*(u,v)$ rather than $\dist_{L<}(u,v)$, as defined in Section~\ref{SS:invariants}. We do this to control cascading changes to $L$.}, and we insert or delete $(u,v)$ to $L$ based on whether $\dist_{L<}(u,v) \le (1+\e) \dist(u,v)$. Our key technical contribution  (in Section~\ref{S:distance-oracle}) is to show that we can efficiently maintain estimates of $\dist_{L<}(u,v)$ even as we insert or delete edges into $L$.

\subsection{Related Work} 
\label{SS:related-work}

\paragraph*{Dynamic Graph Spanners.}
In general graphs, the goal is to maintain, under edge updates, for any integer $k \ge 2$, a spanner with stretch $2k - 1$ and $\widetilde{O}(n^{1+1/k})$ edges. Assuming the Erd\H{o}s--Girth conjecture, this trade-off is essentially optimal. The dynamic spanner problem was initiated by Ausiello et al.~\cite{ausiello2005small}, and has since been studied extensively; see, for example,
\cite{baswana2012fully, bodwin2016fully, elkin2011streaming, bernstein2021deamortization, forster2019dynamic}.
However, these works focus primarily on \EMPH{sparsity}, and the resulting techniques do not naturally extend to maintaining \EMPH{light} spanners.
Moreover, due to inherent degree barriers in general graphs, this line of work exclusively considers \emph{edge updates}.
In contrast, in geometric and other structured metric spaces, vertex updates are often the more natural model, since the underlying graph need not be maintained explicitly.
This distinction is crucial for light spanners, whose construction and analysis rely fundamentally on geometric structure rather than purely combinatorial sparsification.

\paragraph*{Online Spanners.}
Spanners have also been studied extensively in the closely related \emph{online} model.
In this setting, points from a metric space arrive one by one, and at each step, the algorithm is given the subset of points that arrived so far.
The goal is to maintain a $t$-spanner at all times. Unlike in the fully dynamic setting, the algorithm is allowed to \emph{add} edges when a new point arrives, but may never remove previously added edges.
In addition, the total number of points $n$ is not known in advance. A number of algorithms and lower bounds have been obtained for this model, which addressed both sparsity and lightness guarantees along with stretch, under online constraints; see, e.g.,
\cite{BhoreFKT24, BhoreT25, AlonAABN06, AlonA92, BermanC97, HajiaghayiLP17, NaorPS11}.
These results achieve near-optimal trade-offs in the online regime and highlight the fundamental differences between incremental constructions and fully dynamic settings.

\section{Net trees and greedy spanners}\label{sec:NetTreesGreedySpanner}
\paragraph{Preliminaries.} Throughout this paper, let \EMPH{$M = (\cX, \dist)$} be a metric space with constant doubling dimension $\ddim = O(1)$.
For any vertex $x \in \cX$ and radius $r > 0$, we let \EMPH{$B(x, r)$} denote the set of points within distance $r$ of $x$; that is, $B(x,r) \coloneqq \set{y \in \cX : \dist(x,y) \le r}$.
If $G$ is a (geometric) graph, we use the notation \EMPH{$\dist_G(\cdot,\cdot)$} to denote shortest-path distances on $G$. For any positive integer $n$, we write \EMPH{$[n]$} to mean $\set{i \in \Z: 0 \le i \le n}$. Throughout this paper, we fix some sufficiently small positive integer \EMPH{$\e> 0$} and aim to construct a light $(1+O(\e))$-spanner; our Theorem~\ref{thm:main} will follow after rescaling $\e \gets \e / O(1)$.

\subsection{Net-Tree Spanners}

In this subsection, we review the classic net-tree spanner \cite{gao2004deformable}.

\paragraph{Net tree.}  Let $X \subseteq \cX$ be a set of points. Assume that the smallest distance between points in $X$ is $1$, and the largest distance is \EMPH{$\Phi$}; for simplicity we assume $\Phi = 2^i$ for some integer $i$.
A \EMPH{$\Delta$-net} of $X$ is a set of points $N$ such that every point $x \in X$ is within distance $\Delta$ of some point in $N$, and any two net points in $N$ are at distance greater than $\Delta$. A $\Delta$-net can be constructed greedily.
A \EMPH{net hierarchy} of $X$ is a hierarchical sequence of nets $(N_0, N_1, \ldots, N_{\log \Phi})$ such that every $N_i$ is a subset of $N_{i-1}$, and $N_i$ is a $2^{i}$-net of $N_{i-1}$;
we take $N_0 = X$. Observe that (by a geometric series) every point in $X$ is within distance $2^{i+1}$ of some net point in $N_i$. Moreover, the \emph{packing bound} of doubling metrics implies that there are not too many points close together in each $\Delta$-net.
\begin{observation}[Packing bound]
\label{obs:packing}
    Let $N$ be a set of points that are pairwise at distance at least $\Delta$. For any $\alpha \ge 1$ and $x \in \cX$, there are at most $\alpha^{O(\ddim)}$ points in $B(x,\alpha \cdot \Delta) \cap N$.
\end{observation}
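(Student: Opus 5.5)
The packing bound follows from repeatedly applying the definition of doubling dimension, together with the fact that a small ball contains at most one point of $N$. The goal is to cover $B(x,\alpha\Delta)$ by balls small enough that each one holds at most one point of $N$, and then to count those balls.

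Start from the single ball $B(x,\alpha\Delta)$ and apply the doubling property $k$ times. Each application replaces every ball of radius $r$ by at most $2^{\ddim}$ balls of radius $r/2$ that cover it. After $k$ rounds, $B(x,\alpha\Delta)$ is covered by at most $2^{k\ddim}$ balls, each of radius $\alpha\Delta/2^k$. Choose $k=\lceil \log_2 \alpha\rceil+2$, so that each radius is at most $\Delta/4 < \Delta/2$. One point about the definition: it is stated with centers $Y\subseteq X$, while here balls are centered at arbitrary points of $\cX$. I would apply the doubling property in the whole metric $\cX$. If only centers inside the set were allowed, the usual fix is to re-center each small ball at a point it contains, which at most doubles the radius; that costs one extra round of halving and changes nothing asymptotically.

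Now count. Two distinct points of $N$ cannot lie in the same ball of radius $r<\Delta/2$, since by the triangle inequality they would then be at distance at most $2r<\Delta$, contradicting the assumption on $N$. So every covering ball contains at most one point of $N$. Hence
\[
|B(x,\alpha\Delta)\cap N| \le 2^{k\ddim} \le 2^{(\log_2\alpha+3)\ddim} = (8\alpha)^{\ddim}.
\]
Since $\alpha\ge 1$, we have $8\alpha\le\alpha^{3}\cdot 8$, so $(8\alpha)^{\ddim}\le 8^{\ddim}\alpha^{\ddim}$. This is $\alpha^{O(\ddim)}$ once the factor $2^{O(\ddim)}$ is absorbed, which is the standard reading when $\alpha$ is bounded away from $1$. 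For $\alpha$ close to $1$, the same count bounds the number of points by the constant $2^{O(\ddim)}$, which is what the observation is meant to express.

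The only real subtlety is this bookkeeping at $\alpha$ near $1$, together with the re-centering issue; both are routine and affect only constants in the exponent.
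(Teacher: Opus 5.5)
Your argument is correct and is the standard proof; the paper states this observation without proof and treats it as folklore. As you say, $\lceil\log_2\alpha\rceil+2$ rounds of halving cover $B(x,\alpha\Delta)$ by at most $(8\alpha)^{\ddim}$ balls of radius at most $\Delta/4$, and each such ball contains at most one point of $N$.

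Your caveat about $\alpha$ near $1$ is genuinely necessary, not mere bookkeeping. For $N=\{-\Delta,0,\Delta\}$ on the real line, with $x=0$ and $\alpha=1$, the literal bound $\alpha^{O(\ddim)}=1$ fails. So the statement must be read as $O(\alpha)^{\ddim}$. This is $\alpha^{O(\ddim)}$ for $\alpha\ge 2$, via $8\alpha\le\alpha^{4}$; your inequality $8\alpha\le 8\alpha^{3}$ does not accomplish this. That reading is all the paper's applications need, since there $\alpha=\Omega(\varepsilon^{-1})$.
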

An implicit representation of the net hierarchy $\cN$ can be maintained using a data structure called the \EMPH{net tree} \cite{gao2004deformable,KL04,HM06}.
We will sometimes abuse terminology and call $\cN$ a net tree.

\paragraph{Net-tree spanner.}
The net-tree spanner was introduced (under the name ``deformable spanner'') by Gao, Guibas, and Nguyen \cite{gao2004deformable}; similar ideas were independently designed by Krauthgamer and Lee \cite{KL04}.  The \EMPH{$\e$-net-tree spanner} of $X$ with respect to the net tree $\mathcal{N} = (N_0, \ldots, N_{\log \Phi})$ is the graph with vertex set $X$ and with the following edge set: for every scale $i \in [\log \Phi]$, add an edge between any two vertices $u, v \in N_i$ if $\dist(u, v) \le c \cdot 2^i$, where $c \coloneqq 4 + 16 \e^{-1}$.
\begin{lemma}[{\cite[Theorem 3.2]{gao2004deformable}}]
\label{lem:net-spanner}
    An $\e$-net-tree spanner of $X$ is a $(1+\e)$-spanner of $X$.
\end{lemma}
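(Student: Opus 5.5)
We prove the stretch bound by induction on $\dist(u,v)$: we show that for every pair $u,v\in X$ we have $\dist_G(u,v)\le(1+\e)\dist(u,v)$, where $G$ is the $\e$-net-tree spanner. The induction is well defined because $X$ is finite. For each point $x\in X$ and scale $i$, let $p_i(x)\in N_i$ be its ancestor: $p_0(x)=x$, and $p_{i}(x)$ is a point of $N_{i}$ within distance $2^{i}$ of $p_{i-1}(x)$, which exists because $N_i$ is a $2^i$-net of $N_{i-1}$. Two path facts follow from this.

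\emph{Ancestor paths.} By the geometric series, $\dist(x,p_i(x))\le 2^{i+1}$. Each hop $(p_{i-1}(x),p_i(x))$ is itself an edge of $G$ at scale $i-1$, since both endpoints lie in $N_{i-1}$ and are within distance $2^i\le c\cdot 2^{i-1}$. Hence $\dist_G(x,p_i(x))\le \sum_{j\le i}2^j< 2^{i+1}$.

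\emph{Choice of scale.} Fix $u,v$ with $D=\dist(u,v)$, and let $i$ be the smallest scale at which $p_i(u)$ and $p_i(v)$ are joined by an edge, i.e.\ $\dist(p_i(u),p_i(v))\le c2^i$. Such an $i$ exists, since at the top scale $2^{\log\Phi}=\Phi\ge D$ and $c\ge 4$. If $i=0$, the edge $(u,v)$ itself is present and we are done. Otherwise, minimality at scale $i-1$ gives
\[
c2^{i-1}<\dist(p_{i-1}(u),p_{i-1}(v))\le D+2\cdot 2^{i}.
\]
With $c=4+16\e^{-1}$ this yields $2^i(c/2-2)<D$, that is, $2^i<\e D/8$.

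\emph{Stretch computation.} The spanner path goes from $u$ up to $p_i(u)$, across the edge to $p_i(v)$, and down to $v$. Its length is at most
\[
2\cdot 2^{i+1}+\dist(p_i(u),p_i(v))\le 2\cdot2^{i+1}+D+2\cdot 2^{i+1}=D+8\cdot 2^i< D+\e D.
\]
So the induction is not even needed; a direct bound suffices.

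The main obstacle is the constant bookkeeping in the scale choice. One must use the triangle inequality at scale $i-1$, where minimality says the ancestors are \emph{not} adjacent, to show $2^i=O(\e D)$. One must also confirm that the climb is itself realized by spanner edges, rather than merely being short in the metric. If the factor $8$ is too tight against $16\e^{-1}$, I would sharpen the climbing cost to $2^{i+1}-1$ or accept $(1+O(\e))$ and rescale, consistent with the paper's convention.
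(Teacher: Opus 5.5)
Your proof is correct. It is the standard ancestor argument from Gao--Guibas--Nguyen, which the paper cites without reproducing: climb the ancestor chains (each hop is a level-$(j-1)$ spanner edge because $2^j\le c\,2^{j-1}$), take the lowest level $i$ at which the ancestors are adjacent, and use non-adjacency at level $i-1$ to get $(c/2-2)\,2^i<D$.

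Your hedge about the factor $8$ is unnecessary: $c=4+16\varepsilon^{-1}$ is calibrated exactly so that $c/2-2=8/\varepsilon$, which makes the detour $8\cdot 2^i$ strictly less than $\varepsilon D$. The induction framing is also unnecessary, and the existence of $i$ is most cleanly justified by noting that the top-level ancestors lie in $X$, so they are within distance $\Phi\le c\,2^{\log\Phi}$ of each other.
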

For any pair of vertices $(u, v)$, we say the \EMPH{scale} of $(u, v)$ is the value $i$ such that $\dist(u, v) \in [2^{i-1}, 2^i)$. Observe that an edge $(u,v)$ at scale $i$ in an $\e$-net-tree spanner has both endpoints in the $O(\e) \cdot 2^i$-net of the net tree
(not the $2^i$-net). The packing bound implies that every vertex in the $\e$-net-tree spanner has degree at most $\e^{-O(\ddim)} \cdot \log \Phi$.
In fact, using a charging argument,
one can show that the $\e$-net-tree spanner contains at most $\e^{-O(\ddim)} \cdot n$ edges: that is, it is \emph{sparse}.
We frequently denote the $\e$-net-tree spanner as \EMPH{$S$}.

\paragraph{Dynamic maintenance.}
Gao, Guibas, and Nguyen \cite{gao2004deformable} showed that the net tree spanner can be maintained dynamically. We summarize their guarantee below.
\begin{lemma}[Rephrasing of {\cite[Theorem 4.2]{gao2004deformable}}]
\label{lem:deformable}
    There is a data structure that implicitly maintains a hierarchy of nets $\mathcal{N}=(N_0, N_1, \ldots, N_{\log \Phi})$ of a point set $X$ undergoing insertions and deletions. Each insertion or deletion can be processed in $O(\log \Phi)$ time.
    Moreover\footnote{Note that the phrasing of \cite[Theorem 4.2]{gao2004deformable} does not explicitly separate the maintenance of the net hierarchy $\cN$ from the maintenance of the spanner (they just state the existence of a dynamic spanner); however, our phrasing of Lemma~\ref{lem:deformable} is immediate from their proof. In our application, we will need to dynamically maintain an $\e_1$-net-tree spanner of $X$ and an $\e_2$-net-tree spanner of $X$ (for two different values $\e_1 \neq \e_2$) \emph{with respect to the same net tree $\cN$}. This is possible using the data structure of \cite{gao2004deformable}.}, for any constant $\e > 0$, one can maintain an $\e$-net-tree spanner of $X$ with respect to $\mathcal{N}$, in $O(\log \Phi)$ time per insertion or deletion.
\end{lemma}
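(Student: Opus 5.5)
Since Lemma~\ref{lem:deformable} is a rephrasing of \cite[Theorem 4.2]{gao2004deformable}, the plan is to re-derive the two guarantees from the structure of the net tree rather than from scratch. For each scale $i$ and each net point $p \in N_i$, we store two lists. The first is the parent of $p$ in $N_{i+1}$, within distance $2^{i+1}$ (this gives the tree structure). The second is the \emph{neighbor list} $\mathrm{Nbr}_i(p) = \{q \in N_i : \dist(p,q) \le c' \cdot 2^i\}$ for a fixed constant $c' \ge c$ large enough to cover every $\e$ we care about. By Observation~\ref{obs:packing}, each neighbor list has size $(c')^{O(\ddim)} = O(1)$ for constant $\ddim,\e$. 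We store only the scales at which a point actually changes neighborhood or parent, so the representation is implicit (a point appearing at consecutive scales with the same data is compressed). The $\e$-net-tree spanner edges at scale $i$ are exactly the pairs in $\mathrm{Nbr}_i(p)$ with $\dist \le c\cdot 2^i$. Because we only filter one list, we can support two values $\e_1,\e_2$ simultaneously by taking $c'$ at least both corresponding constants.

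\emph{Insertion of $x$.} Walk top-down from scale $\log\Phi$. Maintain the invariant that at scale $i$ we know the set $C_i$ of points of $N_i$ within distance $O(2^i)$ of $x$. $C_{i-1}$ is obtained from the children and the neighbor lists of $C_i$: any point of $N_{i-1}$ near $x$ has a parent in $N_i$ within $2^{i}$ of it, hence near $x$. Each $C_i$ has constant size by packing, so each step costs $O(1)$. Let $i^*$ be the largest scale at which $x$ is farther than $2^{i}$ from all of $N_i$. Then $x$ joins $N_0,\dots,N_{i^*}$ and stays out of higher nets; this keeps the hierarchy valid, since a new point never invalidates the covering of old points and the separation holds by the choice of $i^*$. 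We set its parent from $C_{i^*+1}$ and update the neighbor lists of $x$ and of the $O(1)$ points in each $C_i$. The total time is $O(\log\Phi)$.

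\emph{Deletion of $x$.} This is the main obstacle, because removing a net point may leave its children uncovered, forcing promotions that could cascade. The plan is the standard one. If $x \in N_j$ and $x \notin N_{j+1}$, process scales $i = 1,\ldots,j$ bottom up. At scale $i$, each child $y$ of $x$ in $N_{i-1}$ either finds another point of $N_i$ within $2^i$ (searching $\mathrm{Nbr}$ lists of nearby points, $O(1)$ candidates) and reattaches, or is promoted into $N_i$. Promoted points are pairwise handled greedily to keep separation $>2^i$. Packing shows only $O(1)$ points are affected per scale, since all affected points lie within $O(2^i)$ of $x$, and each promotion updates $O(1)$ neighbor lists. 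The delicate point to verify is that promoted points at scale $i$ remain within $O(2^i)$ of $x$, so that the per-scale work stays constant. This holds because each such point is within $2^{i}$ of a (promoted or original) descendant of $x$, and descendant distances sum geometrically to $O(2^i)$. Summing over scales gives $O(\log\Phi)$ time, and the spanner edge set changes only among these $O(\log\Phi)$ updated neighbor lists.
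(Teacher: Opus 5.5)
The paper does not reprove this lemma; it cites \cite{gao2004deformable} and records only the update rules in Figures~\ref{fig:sparse-insert} and~\ref{fig:sparse-delete}. Your architecture (neighbor lists, a top-down search for the points of $N_i$ near $x$, bottom-up repair with a packing/locality bound) is the right one. However, two steps are wrong as stated.

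\textbf{Insertion level.} The predicate $\dist(x,N_i)>2^i$ is not monotone in $i$. Taking $i^*$ to be the \emph{largest} uncovered scale and inserting $x$ into $N_0,\dots,N_{i^*}$ can therefore break separation at lower scales: your choice of $i^*$ only guarantees separation at scale $i^*$ itself. For an example on the line, take $X=\{0,3\}$ with $N_0=N_1=\{0,3\}$ and $N_2=N_3=\{0\}$, and insert $x=4.5$. Then $\dist(x,N_1)=1.5\le 2$ but $\dist(x,N_2)=4.5>4$, so $i^*=2$. You would place $x$ in $N_1$ at distance $1.5$ from $3$, which violates separation. The correct rule, as in the paper's figure, is to take the \emph{smallest} scale $i_0$ at which $x$ is covered and insert $x$ into exactly $N_0,\dots,N_{i_0-1}$. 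Your sets $C_i$ already let you find $i_0$ within the same time bound.

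\textbf{Deletion sweep.} The sweep cannot stop at the top level $j$ of $x$ (where $x\in N_j\setminus N_{j+1}$). A point promoted into $N_j$ is within $2^j$ of $x$, and $x$ was within $2^{j+1}$ of its parent. So the new point can be up to $3\cdot 2^j$ from $N_{j+1}$, may need promotion into $N_{j+1}$, and so on. For an example, take $X=\{0,3.5,5,9.5\}$ with $N_1=\{0,3.5,9.5\}$, $N_2=N_3=\{0,9.5\}$, $N_4=\{0\}$, and delete $x=3.5$ (so $j=1$). The point $5$ is promoted into $N_1$. But $\dist(5,N_2)=4.5>4$, so $N_2$ no longer covers $N_1$, and $5$ must also enter $N_2$ at scale $j+1$.

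You must keep going upward as long as the previous scale received new points; the paper simply sweeps $i=1,\dots,\log\Phi$. Your locality claim extends to these scales, and is cleanest stated as in Lemma~\ref{lem:net-delete-local}. A point newly added to $N_i$ either lay in the old $N_{i-1}$ and was covered only by $x$, hence is within $2^i$ of $x$. Or it was itself newly added to $N_{i-1}$, hence is within $2^{i-1}$ of $x$ by induction. With both fixes, each scale still costs $O(1)$, and the $O(\log\Phi)$ bounds on time and on spanner-edge changes go through.
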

We will make use of Lemma~\ref{lem:deformable}, but we will have to slightly unwrap the black box --- we will need some control on how the net tree $\cN$ changes after each insertion or deletion. Fortunately, \cite{gao2004deformable} update the net tree $\mathcal{N}$ in a simple and structured way. Inserting a point $x$ into $X$ has the following result, described in Figure~\ref{fig:sparse-insert}.
\begin{figure}[h!]
    \centering
    \begin{algorithm}
    $i_0 \gets$ smallest scale such that the scale-$i_0$ net $N_{i_0} \in \cN$ satisfies $\dist(x, N_{i_0}) < 2^{i_0}$
    \\ update $\cN$ by inserting $x$ into net $N_i \in \cN$ for all $i < i_0$
    \end{algorithm}
    \caption{Pseudocode for \cite{gao2004deformable} algorithm to update net $\mathcal{N}$ after inserting point $x$ into $X$}
    \label{fig:sparse-insert}
\end{figure}

\noindent Deleting a point $x$ from $X$ is described in Figure~\ref{fig:sparse-delete}.
\begin{figure}[h!]
    \centering
    \begin{algorithm}
    update $\cN$ by deleting $x$ from every net $N_i \in \cN$
    \\ for every scale $i \gets 1, 2, \ldots, \log \Phi$:\+
    \\      while there exists some point $y \in N_{i-1}$ with $\dist(u, N_{i}) > 2^i$, add $y$ to $N_i$\-
    \end{algorithm}
    \caption{Pseudocode for \cite{gao2004deformable} algorithm to update net $\mathcal{N}$ after deleting point $x$ from $X$}
    \label{fig:sparse-delete}
\end{figure}

We will not discuss the details of the data structures that \cite{gao2004deformable} maintain to implement these two operations efficiently. We have unwrapped the \cite{gao2004deformable} black box so that we can prove that every insertion/deletion of a point $x$ only changes $\cN$ ``locally'' nearby $x$.
\begin{lemma}
\label{lem:net-delete-local}
Let $\cN^{\rm old} = (N_0^{\rm old}, \ldots, \cN_{\log \Phi}^{\rm old})$ be a net tree for point set $X$. Let $\cN = (N_0, \ldots, N_{\log \Phi})$ be the net tree produced from $\cN^{\rm old}$ by the \cite{gao2004deformable} algorithm after a point $x$ is inserted into (resp. deleted from) $X$. For any scale $i \in [\log \Phi]$, every point $y$ in the symmetric difference of $N_i$ and $N^{\rm old}_i$ satisfies $\dist(y,x) \le 2^i$.
\end{lemma}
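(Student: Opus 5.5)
Insertion and deletion are handled separately, following the pseudocode in Figures~\ref{fig:sparse-insert} and~\ref{fig:sparse-delete}. \textbf{Insertion} is immediate. The only change is that $x$ itself is added to $N_i$ for every $i<i_0$. Hence the symmetric difference at each scale is either empty or $\{x\}$, and $\dist(x,x)=0\le 2^i$.

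\textbf{Deletion} is the substantive case. At scale $i$ the symmetric difference consists of $x$ (if $x\in N_i^{\rm old}$), which trivially satisfies the bound, together with the points $y$ added to $N_i$ by the repair loop. (I read the loop condition as $\dist(y,N_i)>2^i$; the ``$u$'' in the pseudocode is a typo.) Points are never removed at scale $i$ other than $x$, so it suffices to show that every added point $y$ satisfies $\dist(y,x)\le 2^i$.

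I would argue by induction on $i$ that
\[
N_{i-1}\setminus N_{i-1}^{\rm old}\subseteq B(x,2^{i-1}).
\]
Consider a point $y$ added to $N_i$. Then $y\in N_{i-1}$. There are two cases.
\begin{itemize}
\item If $y\in N^{\rm old}_{i-1}\setminus N_i^{\rm old}$, then $y$ was covered in the old hierarchy by some $z\in N_i^{\rm old}$ with $\dist(y,z)\le 2^i$. The only way $y$ can now be uncovered is if that covering point is no longer in $N_i$. Since nothing but $x$ is removed, $z=x$, so $\dist(y,x)\le 2^i$.
\item If $y$ is newly in $N_{i-1}$, then by induction $\dist(y,x)\le 2^{i-1}\le 2^i$.
\end{itemize}
The base case is $N_0=X\setminus\{x\}$, which adds no new points.

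The main subtlety, and the step to check most carefully, is the claim that ``only $x$ was removed from $N_i$.'' This needs the invariant that old points of $N_{i-1}$ covered by old points of $N_i$ other than $x$ stay covered. That holds because $N_i$ only grows apart from losing $x$. One should also confirm that the while-loop may add points in an arbitrary order without breaking this argument. It does not matter: each added $y$ is justified individually by the case analysis above, independent of the order of additions.
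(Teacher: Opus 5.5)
Your proposal is correct and follows essentially the same route as the paper. Insertion is trivial. For deletion you induct on $i$ and split a newly added $y\in N_i$ by whether $y\in N^{\rm old}_{i-1}$: then its old $2^i$-cover must be $x$, because $\dist(y,N^{\rm old}_i\setminus\{x\})>2^i$. Otherwise $y\in N_{i-1}\setminus N^{\rm old}_{i-1}$, and induction gives $\dist(y,x)\le 2^{i-1}$.
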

\begin{proof}
When $x$ is inserted into $X$ (according to the pseudocode in Figure~\ref{fig:sparse-insert}), the lemma follows trivially from the \cite{gao2004deformable} algorithm description: the only point that could differ between $N_i$ and $N^{\rm old}_i$ is $x$. We now argue that the lemma holds when $x$ is deleted from $X$ (according to the pseudocode in Figure~\ref{fig:sparse-delete}).
    By construction, the only point that could be removed from $N_i^{\rm old}$ is $x$; that is, the set $N_i^{\rm old} \setminus N_i$ is either $\varnothing$ or $\set{x}$. So it remains to show that every point $y$ in $N_i \setminus N^{\rm old}_i$ satisfies $\dist(y,x) \le 2^i$.
    The proof is by induction on $i$. When $i=0$, every point in $\points$ is contained in $N^{\rm old}_0$, so $N_0 \setminus N^{\rm old}_0 = \varnothing$ and the claim holds trivially. Now consider $i > 0$. Consider some point $y \in N_i \setminus N_i^{\rm old}$. By the algorithm description, the point $y$ satisfies
    \[\dist(y, N^{\rm old}_{i} \setminus \set{x}) > 2^i.\]
    There are two cases. In the first case, suppose $y \in N^{\rm old}_{i-1}$. Because $N^{\rm old}_{i}$ is a $2^i$-net for $N^{\rm old}_{i-1}$, we have $\dist(y, N^{\rm old}_i) \le 2^i$. We conclude that $\dist(y, x) \le 2^i$ as desired.
    In the second case, suppose $y \not \in N^{\rm old}_{i-1}$. Thus, $y \in N_{i-1} \setminus N^{\rm old}_{i-1}$, and by induction we have that $\dist(y, x) \le 2^{i-1} < 2^i$ as desired.
\end{proof}

Finally, we need one more guarantee from the \cite{gao2004deformable} data structure, which appears (implicitly) in the proof of \cite[Theorem 4.2]{gao2004deformable}.
\begin{claim}[\cite{gao2004deformable}]
\label{clm:fast-ball}
    Consider the data structure of \cite{gao2004deformable} that maintains a hierarchy of nets $(N_0, N_1, \ldots, N_{\log \Phi})$ of $X$  and an $\e$-net-tree spanner. Let $c_\e = 4 + 16\e^{-1}$.
    Given a scale $i \in [\log \Phi]$ and a query point $x \in \mathcal X$,  one can find all vertices in $N_i \cap B(x, c_\e \cdot 2^i)$ in time $\e^{-O(\ddim)} = O(1)$.
\end{claim}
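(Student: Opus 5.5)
The plan is to search the net-tree spanner $S$ top-down, using a greedy descent: a ball query at scale $i$ is answered from a ball query at scale $i+1$ together with net-tree neighbor lists.

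\textbf{Invariant from the data structure.} The data structure of \cite{gao2004deformable} stores, for every scale $j$ and every $v\in N_j$:
\begin{itemize}
\item the list of scale-$j$ neighbors of $v$, that is, all $w\in N_j$ with $\dist(v,w)\le c_\e 2^j$ (these are the $\e$-net-tree spanner edges at scale $j$);
\item the children of $v$, that is, the points of $N_{j-1}$ whose parent is $v$, together with the parent pointer of $v$.
\end{itemize}
By Observation~\ref{obs:packing}, each neighbor list has size $\e^{-O(\ddim)}$.

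\textbf{Reduction to a single nearby net point.} Fix a scale $i$ and a query point $x$.
\begin{enumerate}
\item First locate one point $p\in N_i$ with $\dist(x,p)\le r$, where $r=O(c_\e 2^i)$.
\item Every $y\in N_i\cap B(x,c_\e 2^i)$ then satisfies $\dist(p,y)\le r+c_\e 2^i$.
\item Pick the top scale $j\ge i$ with $2^j$ comparable to $r+c_\e 2^i$, so $j=i+O(1)$. Take the ancestor $p_j$ of $p$ at scale $j$ and the ancestor $y_j$ of $y$ at scale $j$. Ancestors at scale $j$ lie within $2^{j+1}$ of their descendants, so $\dist(p_j,y_j)\le 2^{j+2}+\dist(p,y)\le c_\e 2^j$ once $j$ is a suitable constant above $i$. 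Hence $y_j$ is in the scale-$j$ neighbor list of $p_j$.
\item Descend from $p_j$'s neighbor set to scale $i$ through children. Children are within distance $2^{j}$, and by packing each node has $2^{O(\ddim)}$ children. After $O(1)$ levels this enumerates a superset of $N_i\cap B(x,c_\e 2^i)$ of size $\e^{-O(\ddim)}$.
\item Filter the superset by computing distances to $x$.
\end{enumerate}

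\textbf{Finding the starting point $p$.} This step is the main obstacle, since $x$ may not be in $X$. Two approaches:
\begin{itemize}
\item If $x\in X$, take the ancestor of $x$ at scale $i$ via parent pointers. Since these ancestors form a stored chain, it can be read off in $O(1)$ per level, and it satisfies $\dist(x,p)\le 2^{i+1}$.
\item For general $x$, as in the insertion routine of Figure~\ref{fig:sparse-insert}, the \cite{gao2004deformable} structure maintains for the query/insertion point its nearest-net-point location at each scale when it is located. This is done by a top-down walk, using the same neighbor-then-children step, which keeps a point of $N_j$ within $O(2^j)$ of $x$ at every scale. In the intended application the query is made at a point already located (e.g.\ during the update of $x$), so the pointer is available.
\end{itemize}
I expect the delicate part is justifying that this location is already available, so that the cost is $O(1)$ rather than $O(\log\Phi)$. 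If not, I would state the bound as amortized into the $O(\log\Phi)$ update.
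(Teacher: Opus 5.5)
The paper gives no argument here; it only defers to the proof of \cite[Theorem~4.2]{gao2004deformable}. What that proof implicitly contains is the per-level step of the top-down search performed on insertion. Write $A_j = N_j \cap B(x, c_\varepsilon 2^j)$. Every $y \in A_j$ has a parent $z \in N_{j+1}$ with $\delta(y,z) \le 2^{j+1}$, so $\delta(x,z) \le c_\varepsilon 2^j + 2^{j+1} \le c_\varepsilon 2^{j+1}$, i.e.\ $z \in A_{j+1}$. Hence $A_j$ is obtained from $A_{j+1}$ by scanning the $2^{O(\ddim)}$ children of each of its $\varepsilon^{-O(\ddim)}$ members and filtering by distance to $x$. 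Your neighbor-list/children descent is this same mechanism. Your local reduction is also correct: starting from one $p \in N_i$ near $x$, climb $O(1)$ levels, read the neighbor list of $p_j$, descend, and filter.

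The gap is exactly the step you flag, and it cannot be closed for the claim read literally. For an isolated query at an arbitrary $x \in \mathcal X$ and a single scale $i$, no $O(1)$ bound is attainable. Already in a binary ultrametric, which has doubling dimension $1$, each evaluation $\delta(x,q)$ reveals only $O(\log\log\Phi)$ bits about where $x$ sits, while for small $i$ the answer encodes $\Omega(\log\Phi)$ bits.

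Your $x \in X$ route does not rescue this. Following parent pointers costs one step per distinct ancestor, i.e.\ up to $i-h$ steps, where $h$ is the top level of $x$. The structure keeps no level-indexed ancestor table, and maintaining one would be expensive because a deletion re-parents whole subtrees. Moreover, in \textsc{Delete} the queries are centered at $x$ after $x$ has left $X$, so that route does not apply there at all. Your ``general $x$'' route is the top-down walk, which costs $\varepsilon^{-O(\ddim)}$ per level, not $O(1)$ for an isolated query.

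The correct repair is your fallback, made precise. State the claim conditionally: given $A_{i+1}$, one computes $A_i$ in $\varepsilon^{-O(\ddim)}$ time ($A_{\log\Phi}$ is trivial, since $N_{\log\Phi}$ is a single point). This suffices for every use in the paper. In \textsc{Recompute} and \textsc{RecomputeFast} the net tree is fixed once it has been updated, and every ball query is centered at the same point $x$. So before the bottom-up loop over scales you run one top-down sweep on the updated tree and store $A_j$ for all $j$, at total cost $\varepsilon^{-O(\ddim)}\log\Phi$. Likewise, the sweep costs $\varepsilon_{\rm small}^{-O(\ddim)}\log\Phi$ on the $\varepsilon_{\rm small}$-structure used by \textsc{Estimate}, where $c_{\varepsilon_{\rm small}} 2^{i'} \ge 7\cdot 2^i$. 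These costs are absorbed in Lemmas~\ref{lem:basic-runtime} and~\ref{lem:dist-fast}, and with the sweep you no longer need a single starting point $p$ or the climb.
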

In our applications, we will want something a bit more general: we fix a sufficiently large constant $\alpha$ (say, $\alpha > 100$ suffices) and we want to find all vertices in $N_i \cap B(x, \alpha \cdot c_\e \cdot 2^i)$ in time $\e^{-O(\ddim)}$. Observe that Claim~\ref{clm:fast-ball} allows us to do this: indeed, we could simply maintain the data structure for an $\e/8$-net-tree spanner, which (by Claim~\ref{clm:fast-ball}) allows us to quickly find all vertices in $N_i \cap B(x, c_{\e/8} \cdot 2^i) \supseteq N_i \cap B(x, 100 \cdot c_\e \cdot 2^i)$.
\begin{corollary}
\label{cor:expanded-fast-ball}
    Consider the data structure of \cite{gao2004deformable} that maintains a hierarchy of nets $(N_0, N_1, \ldots, N_{\log \Phi})$ of $X$  and an $\e$-net-tree spanner. Let $c_\e = 4 + 16\e^{-1}$.
    Given a scale $i \in [\log \Phi]$ and a query point $x \in \mathcal X$,  one can find all vertices in $N_i \cap B(x, 10 \cdot c_\e \cdot 2^i)$ in time $\e^{-O(\ddim)} = O(1)$.

\end{corollary}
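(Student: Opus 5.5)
The plan is to reduce Corollary~\ref{cor:expanded-fast-ball} to Claim~\ref{clm:fast-ball} by running the \cite{gao2004deformable} data structure with a smaller accuracy parameter on the \emph{same} net tree $\cN$. By Lemma~\ref{lem:deformable} and its footnote, we can maintain, with respect to one hierarchy $\cN$, both the desired $\e$-net-tree spanner and an auxiliary $\e'$-net-tree spanner. Each costs $O(\log \Phi)$ time per update, so the update time changes only by a constant factor. Take $\e' \coloneqq \e/16$.

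The key step is to compare radii, namely to check that $c_{\e'} \ge 10\, c_\e$. We have
\[c_{\e/16} = 4 + 256\e^{-1} \quad\text{and}\quad 10\, c_\e = 40 + 160\e^{-1}.\]
Their difference is $96\e^{-1} - 36$, which is positive whenever $\e < 8/3$; this certainly holds since $\e$ is small. For a scale $i$ and query point $x$, we invoke Claim~\ref{clm:fast-ball} on the $\e'$-structure. It returns $N_i \cap B(x, c_{\e'} 2^i)$ in time $(\e')^{-O(\ddim)} = \e^{-O(\ddim)}$, because the constant factor $16$ is absorbed in the $O(\ddim)$ exponent.

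We then filter this list by computing $\dist(x,y)$ for each returned $y$ and keeping those with $\dist(x,y) \le 10 c_\e 2^i$. The list has size $\e^{-O(\ddim)}$, either by the running-time bound itself or by Observation~\ref{obs:packing} applied to the $2^i$-separated set $N_i$ with $\alpha = c_{\e'}$. So the filtering also takes $\e^{-O(\ddim)}$ time, and the output is exactly $N_i \cap B(x, 10 c_\e 2^i)$.

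The only point needing care is the legitimacy of maintaining two spanners with different parameters on a common net tree. Claim~\ref{clm:fast-ball} is stated for "the" structure maintaining $\cN$ and an $\e$-spanner, and the nets $N_i$ it queries must be the very nets used by our main structure. I would rely on the footnote to Lemma~\ref{lem:deformable}, which says the \cite{gao2004deformable} update of $\cN$ (Figures~\ref{fig:sparse-insert} and~\ref{fig:sparse-delete}) does not depend on $\e$. Therefore both spanners, and the query procedure of Claim~\ref{clm:fast-ball} for parameter $\e'$, can be built on top of the single shared $\cN$.
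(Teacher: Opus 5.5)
Your proposal is correct and follows the paper's argument. You maintain a second, finer-parameter net-tree spanner on the same net tree (justified by the footnote to Lemma~\ref{lem:deformable}), apply Claim~\ref{clm:fast-ball} to it, and filter the returned list. Your choice $\e' = \e/16$ is the right one: the paper's text uses $\e/8$, but $c_{\e/8} = 4 + 128\e^{-1}$ is smaller than $10\,c_\e = 40 + 160\e^{-1}$, so your explicit radius comparison $c_{\e/16} \ge 10\,c_\e$ (valid for $\e < 8/3$) repairs a small arithmetic slip in the paper's justification.
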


\subsection{Invariants for Our Light Spanner: Delayed Greedy Spanner}
\label{SS:invariants}

While the net-tree spanner $S$ is sparse and can be maintained dynamically, it could have lightness $\Omega(\log n)$ (see Figure~\ref{fig:NetTreeHeavy}).
We aim to find a subgraph of $S$ with $O(1)$ lightness. 
Intuitively, we want to maintain a greedy $(1+\e)$-spanner $H$ of the graph $S$: the greedy spanner (and even the \emph{approximate} greedy spanner) are light \cite{BLW19,FS20}. Recall that the greedy spanner initializes $H \gets \varnothing$ and processes edges in $S$ from smallest to largest; when we process $(u,v)$, we add $(u,v)$ to $H$ iff $\dist_{H}(u,v) > (1+\e) \dist_S(u,v)$. In other words, the greedy spanner satisfies: for every edge $(u,v)$ in $S$, letting $H_{<(u,v)}$
denote the set of edges in $H$ with smaller\footnote{assume all edge lengths in $S$ are distinct, by breaking ties arbitrarily} lengths than $(u,v)$, we have 
$\dist_{H_{<(u,v)}}(u,v) > (1+\e) \dist_S(u,v)$ iff $(u,v) \in L$.

\paragraph{Delayed greedy spanner.} We maintain a light spanner, denoted $\EMPH{$S$} \subseteq S$
that doesn't obey the greedy spanner invariants exactly; rather, we define a more robust set of invariants. 
For any scale $i \in [\log \Phi]$, we define the set of edges $\EMPH{$\lighti i$}\subseteq L$ to be the set of all edges in $L$ with \emph{scale strictly less than} $i$; that is, every edge in $\lighti i$ has length at most $2^{i-1}$. 
For any edge $(u,v)$ of the $\e$-net-tree spanner at scale $i$, we define \EMPH{$\dist^*(u,v)$}$\coloneqq \dist_{\lighti i}(u,v)$; recall that $\dist_{\lighti i}(\cdot, \cdot)$ denotes the shortest-path metric in the geometric graph induced by $\lighti i$.
To construct our light spanner, we maintain a subset of edges $L \subseteq S$ with the following properties:
\begin{invariant}
\label{inv:stretch}
    Every edge $(u,v) \in S \setminus L$ satisfies $\dist^*(u,v) \le (1+\e) \cdot \dist(u,v)$.
\end{invariant}
\begin{invariant}
\label{inv:light}
    Every edge $(u,v) \in L$ satisfies $\dist^*(u,v) > (1+\frac \e 3) \cdot \dist(u,v)$.
\end{invariant}

If $L$ satisfies Invariants~\ref{inv:stretch} and \ref{inv:light}, we call $L$ a \EMPH{delayed greedy spanner} of $S$
(because there is a delay between the time an edge is added to the spanner and the time it is taken into account when making the next decision). 
We emphasize that the delayed greedy spanner invariants are different than the greedy spanner invariants for $H$: in our invariants, the choice of adding $(u,v)$ to the spanner depends on $L[i]$ (the set of edges in $L$ with \emph{scale} strictly smaller than the scale of $(u,v)$), rather than $H_{<(u,v)}$ (the set of edges in the greedy spanner $H$ that are strictly shorter than $\dist(u,v)$). Nevertheless, we show that any set of edges $L$ that satisfies Invariants~\ref{inv:stretch} and \ref{inv:light} is a $(1+O(\e))$-spanner (Lemma~\ref{lem:inv-implies-stretch}) and has constant lightness (Lemma~\ref{lem:inv-implies-light}). 
\begin{restatable}{lemma}{invImpliesStretch}
    \label{lem:inv-implies-stretch}
Let $S$ be an $\e$-net-tree spanner for $X$, and let $L \subseteq S$. 
If Invariant~\ref{inv:stretch} holds, then $L$ is a $(1+3 \e)$-approximate spanner for $(X, \dist)$.
\end{restatable}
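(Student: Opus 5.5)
We first show, by induction on scale, that every edge $(u,v) \in S$ satisfies $\dist_L(u,v) \le (1+\e)\dist(u,v)$. Then we invoke Lemma~\ref{lem:net-spanner} to pass from $S$ to all pairs of $X$.

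\textbf{Step 1 (edges of $S$).} We prove by strong induction on the scale $i$: every edge $(u,v) \in S$ at scale $i$ satisfies $\dist_{L}(u,v) \le (1+\e)\dist(u,v)$. If $(u,v) \in L$, this is immediate, since the edge itself gives $\dist_L(u,v) = \dist(u,v)$. Otherwise, Invariant~\ref{inv:stretch} gives $\dist^*(u,v) = \dist_{\lighti i}(u,v) \le (1+\e)\dist(u,v)$. Since $\lighti i \subseteq L$, we get $\dist_L(u,v) \le \dist_{\lighti i}(u,v) \le (1+\e)\dist(u,v)$.

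Note that the induction is not even needed for this step: $\dist^*$ is already measured in a subgraph of $L$. Induction on scale would only be needed if $\dist^*$ were measured in $S$ restricted to lower scales, and here it is measured directly in $L$. The base case, scale $0$ with $\lighti 0 = \varnothing$, is worth a remark. Edges of length $\ge 1$ have scale $\ge 1$, since scale $i$ means $\dist(u,v) \in [2^{i-1},2^i)$. The invariant with $\dist^* = \infty$ would simply force such edges into $L$, so nothing breaks.

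\textbf{Step 2 (all pairs).} Fix $x,y \in X$. By Lemma~\ref{lem:net-spanner}, there is a path $x = p_0, p_1, \ldots, p_k = y$ in $S$ with $\sum_j \dist(p_{j-1},p_j) \le (1+\e)\dist(x,y)$. Replace each edge $(p_{j-1},p_j)$ by its $L$-path from Step 1. This gives
\[
\dist_L(x,y) \le (1+\e)\sum_j \dist(p_{j-1},p_j) \le (1+\e)^2 \dist(x,y) \le (1+3\e)\dist(x,y),
\]
where the last inequality uses $\e \le 1$, so that $\e^2 \le \e$.

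\textbf{Main obstacle.} The only real subtlety is the definitional one: checking that $\dist^*$ is a distance in a subgraph of $L$ rather than of $S$, and that the scale-$0$ boundary case (an empty $\lighti i$) is harmless. Once that is confirmed, the argument is the two-line composition of stretches above. If instead the intended reading measured $\dist^*$ in lower-scale edges of $S$, I would run the strong induction on scale explicitly. Every edge on the witnessing path has scale $< i$, so the inductive hypothesis converts it into an $L$-path. This picks up a single extra $(1+\e)$ factor only if the hypothesis is phrased as "$\dist_L \le (1+\e)\dist$ for edges in $L$ plus invariant-justified edges." I would fold that into the same $(1+3\e)$ budget.
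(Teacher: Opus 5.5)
Your proof is correct and is the paper's argument: every edge of $S$ has stretch at most $1+\varepsilon$ in $L$ because $L[i] \subseteq L$, and composing this with the $(1+\varepsilon)$-stretch of $S$ from Lemma~\ref{lem:net-spanner} gives $(1+\varepsilon)^2 \le 1+3\varepsilon$. Your explicit case split for edges already in $L$ is slightly more careful than the paper's one-line version, which applies Invariant~\ref{inv:stretch} to all edges of $S$. Your closing paragraph about an alternative reading of $\dist^*$ is unnecessary, and under that reading the $(1+\varepsilon)$ factors would compound across scales rather than add just one more.
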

\begin{proof}
    By Lemma~\ref{lem:net-spanner}, the $\e$-net-tree spanner $S$ is a $(1+\e)$-spanner for $(X, \dist)$. Moreover, Invariant~\ref{inv:stretch} implies that $L$ is a $(1+\e)$-spanner for $S$: by triangle inequality it suffices to check that every edge of $S$ is preserved up to $1+\e$ approximation in $L$, and indeed, for every edge $(u,v) \in S$, we have $\dist_L(u,v) \le \dist^*(u,v) \le (1+\e) \dist(u,v) \le (1+\e) \dist_S(u,v)$. We conclude that $L$ is a spanner for $(X,\delta)$ with stretch $(1+\e)^2 \le (1+3\e)$.
\end{proof}

\begin{restatable}{lemma}{invImpliesLight}
\label{lem:inv-implies-light}
Let $S$ be an $\e$-net-tree spanner for $X$, and let $L \subseteq S$. If Invariants~\ref{inv:stretch} and \ref{inv:light} hold, then $L$ has lightness $\e^{-O(\ddim)}$.    
\end{restatable}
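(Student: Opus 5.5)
Our plan is to reduce the statement to the known lightness of greedy-like spanners in doubling metrics, following the framework of Filtser and Solomon~\cite{FS20} and Borradaile, Le, and Wulff-Nilsen~\cite{BLW19}. The key property we want is that $L$ is ``locally greedy'': no edge of $L$ can be replaced by a short detour through other, much shorter edges of $L$. Concretely, we aim to show that $L$ satisfies the hypothesis of the lightness theorem of~\cite{BLW19}, or its approximate-greedy variant. That hypothesis says that for every edge $(u,v)\in L$, the distance in $L\setminus\{(u,v)\}$ using only edges of length at most $\dist(u,v)$ exceeds $(1+\e')\dist(u,v)$ for some $\e'=\Theta(\e)$. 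Lightness $\e^{-O(\ddim)}$ would then follow directly from their result.

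\textbf{Step 1: a single-scale stretch statement.} The obstacle is that Invariant~\ref{inv:light} only controls paths in $\lighti i$, the edges of strictly smaller scale. Paths through edges of the same scale $i$ are not ruled out, because two edges of the same scale may differ in length by a factor of $2$. So we first split $L$ into the sets $L_i$ of scale-$i$ edges. For each fixed $i$, we would like to show that $\lighti i \cup L_i$ is ``greedy-like'' with respect to $L_i$.

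\textbf{Step 2: handling shortcuts that use scale-$i$ edges.} Let $(u,v)\in L_i$, and suppose some path $P$ in $\lighti{i}\cup L_i\setminus\{(u,v)\}$ has length at most $(1+\e/10)\dist(u,v)$. Then $P$ contains at most one scale-$i$ edge, since each such edge has length at least $2^{i-1}>\dist(u,v)/2$. If it contains none, we directly contradict Invariant~\ref{inv:light}. If it contains exactly one, say $(x,y)$, then $u,x$ and $y,v$ are joined by paths in $\lighti i$ of total length at most about $\dist(u,v)-\dist(x,y)+\e\dist(u,v)/10$.

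This is the main obstacle. Two parallel edges of the same scale do not immediately contradict the invariants. The approach we would try first is a charging argument within each scale. By the net-tree structure, endpoints of scale-$i$ edges lie in an $\Omega(\e 2^i)$-separated net $N_{i-\log(1/\e)-O(1)}$, so by Observation~\ref{obs:packing} each vertex has only $\e^{-O(\ddim)}$ incident scale-$i$ edges. We would then greedily thin $L_i$ into $\e^{-O(\ddim)}$ subclasses. Within each subclass, no two edges have endpoints within distance $\e 2^i/100$ of each other. This rules out the one-scale-$i$-edge case, because near-parallel edges with nearby endpoints would be needed for the detour.

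\textbf{Step 3: apply the theorem per subclass and sum.} Each subclass $L'$, together with all of $\lighti i$, then satisfies the $(1+\e')$-greedy condition. Applying the charging framework of~\cite{BLW19,FS20} to each subclass, rather than black-boxing the whole of $L$, gives total weight $\e^{-O(\ddim)}w(\mathrm{MST})$. Summing over only $\e^{-O(\ddim)}$ subclasses per scale is not yet enough, because there are $\log\Phi$ scales. So we must instead partition $L$ globally into $\e^{-O(\ddim)}$ classes, each a union over all scales, and check that each class is globally greedy-like. That holds because paths through edges of smaller scale are excluded by Invariant~\ref{inv:light}, and paths through same-scale edges are excluded by the separation within a class. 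The global condition requires only that each edge be unreachable via shorter edges of its own class, which is weaker than what Step 2 gives.

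Finally, we would use Invariant~\ref{inv:stretch} to compare with the MST of $X$: since $L$ is a spanner of $X$, $w(\mathrm{MST}(L))=\Theta(w(\mathrm{MST}(X)))$. This gives lightness $\e^{-O(\ddim)}$.
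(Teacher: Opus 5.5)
There is a genuine gap in Step~3. The property you end up certifying for each class $C$ is that no edge of $C$ is $(1+\varepsilon')$-approximated by shorter edges of $C$. That property alone does not bound $w(C)$ by $\varepsilon^{-O(d)}\,w(\mathrm{MST}(X))$. Every tree has it. For instance, the net tree of the path in Figure~\ref{fig:NetTreeHeavy} is a tree whose edges lie in $S$, and its weight is $\Theta(\log n)\cdot w(\mathrm{MST})$.

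The lightness theorem of \cite{BLW19} concerns the greedy spanner of a doubling point set. The reduction of \cite{FS20} lets you apply it to a graph $G$ only when $G$ is self-greedy \emph{and} a spanner of $X$. The spanner property is what makes $(X,\delta_G)$ doubling and gives $w(\mathrm{MST}(X,\delta_G))=\Theta(w(\mathrm{MST}(X)))$. Your $L$ is a spanner but not self-greedy. After partitioning, each class is at best self-greedy but is no longer a spanner. Your closing MST comparison is about $L$, not about the classes. The stronger per-scale statement from Step~2 (edges of $C_i$ cannot be approximated through $L[i]\cup C_i$) is not the hypothesis of any theorem you invoke either.

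Step~2 also needs repair, though this does not rescue Step~3:
\begin{itemize}
\item A detour may contain two scale-$i$ edges when $\delta(u,v)$ is close to $2^i$.
\item A one-edge detour through $(x,y)$ need not have $x$ near $u$, since the slack $\delta(u,v)-\delta(x,y)$ can be about $2^{i-1}$.
\end{itemize}
A conflict radius of $O(2^i)$ instead of $\varepsilon 2^i/100$ would fix both points and still give $\varepsilon^{-O(d)}$ colors.

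The missing idea is to apply the black box to a graph that \emph{is} both a spanner and greedy, and then charge $L$ to it, rather than trying to make $L$ or pieces of it greedy. The paper does this as follows.
\begin{itemize}
\item Build the greedy $(1+\varepsilon/3)$-spanner $\hat L$ of $L$. Invariant~\ref{inv:stretch} makes $L$ a $(1+3\varepsilon)$-spanner of $X$, so $(X,\delta_L)$ has doubling dimension $O(d)$, and $\hat L$ is light by \cite{BLW19}.
\item Take $(u,v)\in L\setminus\hat L$ of scale $i$. The greedy algorithm found a path in $\hat L$ of length at most $(1+\varepsilon/3)\delta(u,v)$ using only shorter edges, hence only edges of scale $\le i$.
\item By Invariant~\ref{inv:light}, this path contains a scale-$i$ edge $(u',v')$ of $\hat L$ with $\delta(u,u')\le(1+\varepsilon/3)2^i$. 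Charge $(u,v)$ to it.
\end{itemize}
Your observation that scale-$i$ endpoints lie in a net of spacing $\Theta(\varepsilon 2^i)$ is exactly what bounds the number of charges per edge of $\hat L$ by $\varepsilon^{-O(d)}$. The same-scale shortcuts that worried you become harmless once you charge to them instead of excluding them.
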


\begin{proof}
    We use, as a black box, the fact that the greedy $(1+\frac \e 3)$-spanner in a metric space of doubling dimension $d$ has lightness $\e^{-O(d)}$ \cite{BLW19}. We consider the shortest-path metric $\dist_L$ induced by the edges of $L$; because $(X, \dist)$ has doubling dimension $\ddim$, the stretch bound of Lemma~\ref{lem:inv-implies-stretch} together with the definition of doubling dimension implies that the doubling dimension of $(X, \dist_L)$ is $O(\ddim)$ (see \cite[Observation 7]{FS20}).
    Consider building the greedy $(1+ \frac \e 3)$-spanner \EMPH{$\hat L$} of $L$: that is, we initialize $\hat L \gets \varnothing$, then iterate over the edges of $L$ from smallest to largest\footnote{by triangle inequality, it suffices to examine only the edges of $L$ when building the greedy spanner, rather than all pairs of vertices in $L$}, and add an edge $(u,v) \in L$ into the spanner $\hat L$ iff $\dist_{\hat L}(u,v) > (1+ \frac \e 3) \dist_L(u,v)$. 
    The greedy spanner $\hat L$ has lightness $\e^{-O(\ddim)}$.
    We now \EMPH{charge} every edge of $L$ to an edge of $\hat L$. If an edge $(u,v) \in L$ is added to $\hat L$, then charge $(u,v)$ to itself. 
    Otherwise, suppose $(u,v) \in L$ is not added to $\hat L$. This means that there is a path $P$ between $u$ and $v$ in $\hat L$ (at the instant that $(u,v)$ is considered) with length at most $(1+ \frac \e 3) \dist_L(u,v)$. 
    Because $S$ is a geometric graph, $\dist_S(u,v) = \dist(u,v)$; i.e., $P$ has length at most $(1+ \frac \e 3)\dist(u,v)$. Let $i$ be the scale of $(u,v)$, meaning that $\dist(u,v) \in [2^{i-1}, 2^i)$. 
    The path $P$ contains some scale-$i$ edge $(u',v')$ in the greedy spanner $\hat L$: otherwise, if $P$ contained only scale-$j$ edges with $j < i$, we would have $\dist^*(u,v) \le (1+ \frac \e 3) \cdot \dist(u,v)$, contradicting Invariant~\ref{inv:light}. Charge $(u,v)$ to the scale-$i$ edge $(u',v')$.
    
    We now claim that every edge $(u',v')$ in $\hat L$ is charged at most $\e^{-O(\ddim)}$ times. Indeed, whenever a scale-$i$ edge $(u',v')$ is charged by a scale-$i$ edge $(u,v)$, we have $\dist(u,u') \le (1+\frac \e 3) 2^i$; this is because, by definition of charging, $u'$ lies on a path $u$ and $v$ with length at most $(1+ \frac \e 3) \dist(u,v) \le (1+ \frac \e 3) 2^i$. The packing bound (Observation~\ref{obs:packing}) implies that there are only $\e^{-O(\ddim)}$ scale-$i$ edges that could charge $(u',v')$. Because we only charge scale-$i$ edges in $L$ to scale-$i$ edges in $\hat L$, we conclude that $L$ is at most $\e^{-O(\ddim)}$ times heavier than $L$, which is $\e^{-O(\ddim)}$ times heavier than the minimum spanning tree on $X$.
\end{proof}

\section{Light Spanner with Worst-Case Recourse Bounds}
\label{S:recourse}

We now describe how to maintain our light spanner $L$, satisfying Invariants~\ref{inv:stretch} and \ref{inv:light}. The current section focuses on describing a spanner with a small \emph{recourse} bound, and proving that our updates maintain Invariants~\ref{inv:stretch} and \ref{inv:light}. 
In Section~\ref{S:distance-oracle}, we explain how to perform these updates quickly. 
We maintain a data structure \EMPH{\DS}. It stores:

\begin{itemize}
    \item a set of points \EMPH{$\points$}
    \item a net-tree \EMPH{$\net$} on the points $\points$
    \item an $\e$-net-tree spanner \EMPH{$\sparse$} of $\points$ with respect to $\net$ 
    \item a subgraph $\EMPH{\light} \subseteq \sparse$ called the \EMPH{light spanner}, such that Invariants~\ref{inv:stretch} and \ref{inv:light} hold.
\end{itemize}

All these variables are initialized as $\varnothing$. The data structure supports two operations---insertion and deletion of a point $x$. We say that a procedure $\textsc{Insert}(x)$ \EMPH{correctly implements insertion} of point $x$ if the procedure assigns $\points \gets \points \cup \set{x}$ and updates $\net$, $\sparse$, and $\light$ to satisfy the descriptions above. Similarly, a procedure $\textsc{Delete}(x)$ \EMPH{correctly implement deletion} of a point $x$  if it assigns $\points \gets \points \setminus \set{x}$ and updates $\net$, $\sparse$, and $\light$.

\subsection{Insertion}
To insert a point $x$, we first update the net-tree according to \cite{gao2004deformable} (as summarized in our Lemma~\ref{lem:deformable})
and then recompute the spanner edges $\light$ nearby $x$ to satisfy the invariants. The pseudocode is given below. Our key insight is that when we recompute spanner edges at level $i$, we only need to recompute edges with endpoints in $B(x, O(2^i))$.

\begin{figure}[h!]
\centering
\begin{algorithm}
    \ul{$\texttt{DS}.\textsc{Insert}(x)$}:\+
    \\ $\points \gets \points \cup \set{x}$
    \\ update $\net$ and $\sparse$ according to Lemma~\ref{lem:deformable}
    \\ $\texttt{DS}.\textsc{Recompute}(x)$\-
    \\    
    \\ \ul{$\texttt{DS}.\textsc{Recompute}(x)$:}\+
    \\ for every scale $i \gets 0, 1, \ldots, \log \Phi$:\+
    \\      for every scale-$i$ edge $(u,v)$ of $\sparse$ with $u,v \in B(x, 4 \cdot 2^i)$:\+
    \\          if $\dist^*(u,v) > (1+\e) \cdot \dist(u,v)$:\+
    \\  $\light \gets \light \cup \set{(u,v)}$\-
    \\ else:\+
    \\ $\light \gets \light \setminus \set{(u,v)}$\-
\end{algorithm}
\label{fig:insert-recourse}
\end{figure}

\begin{lemma}
\label{lem:insert-invariants}
The procedure $\DS.\textsc{Insert}(x)$ correctly implements the insertion of a point $x$.
\end{lemma}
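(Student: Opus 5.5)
The plan is to check the four components of \DS{} one at a time. By Lemma~\ref{lem:deformable}, $\points$, $\net$ and $\sparse$ are updated correctly, so what remains is to show that $\light \subseteq \sparse$ and that Invariants~\ref{inv:stretch} and~\ref{inv:light} hold afterwards.

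\textbf{Containment $\light \subseteq \sparse$.} By the insertion pseudocode (Figure~\ref{fig:sparse-insert}), the nets only change by adding $x$ to $N_i$ for $i<i_0$. Hence the old $\sparse$ is contained in the new $\sparse$; the only new edges are incident to $x$. So every old edge of $\light$ is still in $\sparse$, and $\textsc{Recompute}$ only adds edges of $\sparse$.

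\textbf{Invariants, by induction on the scale.} The key structural fact is that $\dist^*(u,v)$ for a scale-$i$ edge depends only on $\lighti i$. This set consists of edges of scale $<i$, and those are final once $\textsc{Recompute}$ has finished scales $0,\dots,i-1$. I prove by induction on $i$ that after the iteration for scale $i$, every scale-$i$ edge of $\sparse$ satisfies both invariants with respect to the final $\lighti i$.

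\emph{Recomputed edges.} Consider scale-$i$ edges with both endpoints in $B(x,4\cdot 2^i)$. These satisfy Invariant~\ref{inv:stretch} directly by the rule. They also satisfy Invariant~\ref{inv:light}, because an edge is kept only if $\dist^* > (1+\e)\dist \ge (1+\frac\e3)\dist$. Every new edge of $\sparse$ is of this type: it is incident to $x$, and its other endpoint lies within $2^i$ of $x$.

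\emph{Untouched edges.} This is the main obstacle. Let $(u,v)$ be a scale-$i$ edge that is not recomputed, say with $u\notin B(x,4\cdot 2^i)$. Since $\dist(u,v)<2^i$, both $u$ and $v$ are at distance more than $3\cdot 2^i$ from $x$. Every edge of scale $j<i$ whose status changed has both endpoints in $B(x,4\cdot 2^j)\subseteq B(x,2\cdot 2^i)$.

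If a $u$--$v$ path used such an edge with endpoint $w$, its length would be at least
\[
\dist(u,w)+\dist(w,v) > 2\,(3\cdot 2^i - 2\cdot 2^i) = 2\cdot 2^i > (1+\e)\dist(u,v).
\]
So every $u$--$v$ path in $\lighti i$ of length at most $(1+\e)\dist(u,v)$ avoids all changed edges. Such a path therefore exists in the new $\lighti i$ if and only if it existed in the old one.

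Consequently, the truth values of both predicates are preserved:
\begin{itemize}
\item ``$\dist^*(u,v)\le(1+\e)\dist(u,v)$'', which gives Invariant~\ref{inv:stretch} for edges not in $\light$;
\item ``$\dist^*(u,v)\le(1+\frac\e3)\dist(u,v)$'', whose negation gives Invariant~\ref{inv:light} for edges in $\light$.
\end{itemize}
Since the membership of $(u,v)$ in $\light$ is unchanged and the invariants held before the insertion, they hold afterwards. Together with the containment above, this shows that $\DS.\textsc{Insert}(x)$ correctly implements the insertion of $x$.
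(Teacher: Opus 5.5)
Your proposal is correct and follows essentially the same route as the paper: Lemma~\ref{lem:deformable} handles $\points,\net,\sparse$, containment follows from $S^{\rm old}\subseteq S$, edges with both endpoints in $B(x,4\cdot 2^i)$ are checked directly, and for the remaining edges a locality argument shows that any $u$--$v$ path in $\lighti i$ of length at most $(1+\e)\dist(u,v)$ cannot reach $B(x,2\cdot 2^i)$, where all changes to $\lighti i$ occur. Your combination of the paper's two cases into preservation of the predicates $\dist^*(u,v)\le t\cdot\dist(u,v)$ for $t\in\{1+\e,1+\frac\e3\}$ is a minor and slightly cleaner repackaging, as is your explicit check that the new edges of $\sparse$ (all incident to $x$) fall inside the recomputed ball, which the paper leaves implicit.
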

\begin{proof}
Let $\cN^{\rm old}$, $S^{\rm old}$, $L^{\rm old}$, and $X^{\rm old}$ denote $\net$, $\sparse$, $\light$, and $\points$ (respectively) before the insertion process.
For brevity, we write $\cN$, $S$, $L$, and $X$ instead of $\net$, $\sparse$, $\light$, and $\points$.
By Lemma~\ref{lem:deformable}, the variables $\points$, $\net$ and $\sparse$ are correctly maintained.
Moreover, we have that $L \subseteq S$, because (by description of the algorithm) $S^{\rm old} \subseteq S$ and (by assumption) $L^{\rm old} \subseteq S^{\rm old}$.

    It remains to show that that every edge $(u,v)$ in $S$ satisfies Invariants~\ref{inv:stretch} and \ref{inv:light}.
    Suppose that edge $(u,v)$ is at scale $i$ (that is, $\delta(u,v)\in (2^{i-1},2^{i}]$).
    Note that the distance $\dist^*(u,v) = \dist_{L[i]}(u,v)$ 
    depends only on the edges in $L$ with scale strictly less than $i$; thus, it suffices to show that $(u,v)$ satisfies the two invariants at the moment after the for-loop processes scale $i$.
    If $u,v \in B(x, 4 \cdot 2^i)$, then $(u,v)$ satisfies the two invariants by description of the algorithm. Now suppose that some endpoint, without loss of generality $u$, is not in $B(x, 4 \cdot 2^i)$. In this case, $(u,v)$ is in $L$ if and only if $(u,v)$ is in $L^{\rm old}$. There are two cases:
    \begin{itemize}
        \item \textbf{\boldmath Suppose that $(u,v)$ is not in $L$.} Thus $(u,v)$ is not in $L^{\rm old}$. To satisfy Invariant~\ref{inv:stretch}, we must show $\dist_{L[i]}(u,v) \le (1+\e) \cdot \dist(u,v)$. In fact, we show that either $\dist_{L[i]}(u,v) \le \dist_{L^{\rm old}[i]}(u,v)$ or $\dist_{L^{\rm old}[i]}(u,v) > 2 \dist(u,v)$; see Figure~\ref{fig:recourse}. This suffices, as $L^{\rm old}$ satisfies Invariant~\ref{inv:stretch}: we conclude that $\dist_{L[i]}(u,v) \le \dist_{L^{\rm old}[i]}(u,v) \le (1+\e) \dist(u,v)$ as desired. Consider a shortest path $P$ in $L^{\rm old}[i]$ between $u$ and $v$. 
        The description of the algorithm implies that the only edges that that change between $L[i]$ and $L^{\rm old}[i]$ are in the ball $B(x, 4 \cdot 2^{i-1})$. Because we assumed $\dist(u,x) > 4 \cdot 2^i$, triangle inequality implies that endpoints of any edge $(u',v')$ in $L^{\rm old}[i] \setminus L[i]$ are far from $u$: specifically, $\dist(u,u') \ge 4 \cdot 2^i - 4 \cdot 2^{i-1} = 2 \cdot 2^i$, and so $\dist_{L^{\rm old}}(u,u') \ge 2 \cdot 2^i$.
        If $P$ has length $\ge 2 \cdot 2^i$ then we are done; otherwise,
        $P$ does not include any edges in $L^{\rm old}[i] \setminus L[i]$, and so $P$ is also a path in $L[i]$ as desired.
    \begin{figure}[h!]
    \centering
    \includegraphics[width=0.4\linewidth]{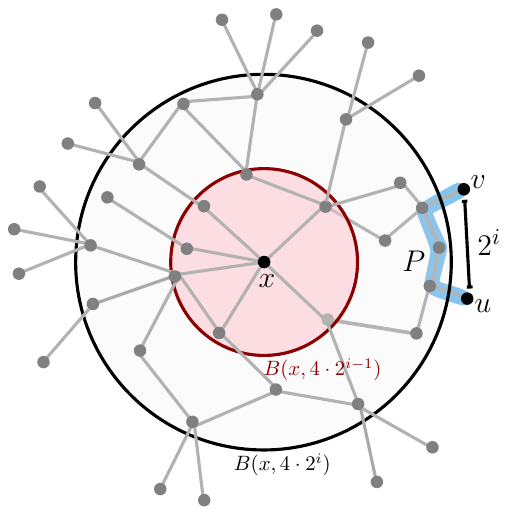}
    \caption{A light spanner $L^{\rm old}$. For a given scale $i$, the spanners $L[i]$ and $L^{\rm old}[i]$ differ only within the ball $B(x, 4 \cdot 2^{i-1})$. A scale-$i$ pair $(u,v)$ outside of $B(x, 4 \cdot 2^i)$. A short path $P$ between $u$ and $v$ in $L^{\rm old}$ does not wander inside $B(x, 4 \cdot 2^{i-1})$, so $P$ is also in $L$.}
    \label{fig:recourse}
    \end{figure}

        \item \textbf{\boldmath Suppose $(u,v)$ is in $L$.} Thus $(u,v)$ is in $L^{\rm old}$. To satisfy Invariant~\ref{inv:light}, we must show that $\dist_{L[i]}(u,v) > (1+ \frac \e 3) \cdot \dist(u,v)$. We claim that either $\dist_{L[i]}(u,v) \ge \dist_{L^{\rm old}[i]}(u,v)$, or $\dist_{L[i]}(u,v) \ge 2 \dist(u,v)$. This suffices to prove the claim, because $L^{\rm old}$ satisfies Invariant~\ref{inv:light} and so $\dist_{L^{\rm old}[i]}(u,v) > (1+ \frac \e 3)\dist(u,v)$; thus, in either case, $\dist_{L[i]}(u,v) > (1+ \frac \e 3) \dist(u,v)$. Let $P$ be a shortest path between $u$ and $v$ in $L[i]$. If $\dist_{L[i]}(u,v) \ge \dist_{L^{\rm old}[i]}(u,v)$ then we are done. Otherwise, if $\dist_{L[i]}(u,v) < \dist_{L^{\rm old}[i]}(u,v)$, then $P$ must contain some edge of $L[i]$ that is not in $L^{\rm old}[i]$. But the description of the algorithm implies that any such edge has endpoints in $B(x, 4 \cdot 2^{i-1})$. By triangle inequality and our assumption that $\dist(u,x) \ge 4 \cdot 2^i$, we have $\dist(u, u') \ge 2 \cdot 2^i$, and so $P$ has length at least $2 \cdot 2^i$ as desired. 
        \aftermath
    \end{itemize}
\end{proof}

We remark that the $\textsc{Insert}$ procedure has bounded recourse: at most $\e^{-O(\ddim)} \log \Phi$ edges change in the spanner $L$. This is because, for every scale $i \in \lceil \log \Phi \rceil$, there are at most $\e^{-O(\ddim)}$ scale-$i$ edges in the $\e$-net-tree spanner in $B(x, 4 \cdot 2^i)$, by definition of net-tree spanner and packing bound (Observation~\ref{obs:packing}). 

\subsection{Deletion}
The procedure for deleting a point $x$ is similar to insertion. The procedure \EMPH{$\DS.\textsc{Delete}(x)$} first updates the net tree in a standard manner by greedily fixing each net \cite{gao2004deformable,KL04}, then deletes all edges in $L$ that were incident to $x$, and finally recompute the spanner edges nearby $x$ using the $\DS.\textsc{Recompute}$ procedure. The pseudocode and analysis are fairly similar to insertion procedure (albeit slightly more complicated).
Like insertion, the deletion procedure has recourse at most $\e^{-O(\ddim)} \cdot \log \Phi$. 
The pseudocode for the $\DS.\textsc{Delete}$ procedure is given in Figure~\ref{fig:delete-recourse}.

\begin{figure}[h!]
\centering
\begin{algorithm}
    \ul{$\DS.\textsc{Delete}(x)$}:\+
    \\ $\points \gets \points \setminus \set{x}$
    \\ update $\net$ and $\sparse$ according to Lemma~\ref{lem:deformable}
    \\ delete every edge in $\light$ incident to $x$
    \\ \DS.\textsc{Recompute}$(x)$
\end{algorithm}
\caption{Deletion procedure with bounded recourse}
\label{fig:delete-recourse}
\end{figure}

\begin{lemma}
\label{lem:delete-invariants}
The procedure $\DS.\textsc{Delete}(x)$ correctly implements the deletion of a point $x$.    
\end{lemma}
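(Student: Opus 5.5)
We follow the proof of Lemma~\ref{lem:insert-invariants} closely, with superscript ${\rm old}$ denoting values before the deletion. First, by Lemma~\ref{lem:deformable}, $\points$, $\net$ and $\sparse$ are correctly maintained. Next we check $L\subseteq S$. This is no longer automatic, since a deletion can remove edges of $S$. There are two sources of such removals. Edges incident to $x$ are deleted explicitly. Edges $(u,v)$ at scale $i$ can also leave $S$ because their endpoints change net membership. By Lemma~\ref{lem:net-delete-local}, any point whose membership in $N_j$ changed lies within $2^j$ of $x$, and such a point is an endpoint of an edge of $S$ only at scales $i$ with $j \le i - \log(1/\e)+O(1)$. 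Hence both endpoints of any edge whose $S$-membership changed are within $2^{i-1} + c\cdot 2^{j} \le 4\cdot 2^i$ of $x$, and are within $4\cdot 2^i$ of $x$ in any case. Such edges are therefore reexamined by $\textsc{Recompute}(x)$. We must also modify the reading of $\textsc{Recompute}$ slightly, so that it deletes from $L$ any edge no longer in $S$. I would state this as the convention that $L \gets L\cap S$ after the net update. The deletion of edges incident to $x$ is a special case of this.

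The main step is verifying Invariants~\ref{inv:stretch} and~\ref{inv:light} for a scale-$i$ edge $(u,v)\in S$. As before, $\dist^*(u,v)$ depends only on scales $<i$, so we may look at the moment after scale $i$ is processed. If $u,v\in B(x,4\cdot 2^i)$, the invariants hold by the explicit test in $\textsc{Recompute}$. Otherwise, say $\dist(u,x)>4\cdot 2^i$. Then by the locality observation above, $(u,v)$ was already in $S^{\rm old}$ and its status in $L$ is unchanged. It remains to argue that every edge of $L[i]\,\triangle\, L^{\rm old}[i]$ has both endpoints in $B(x,4\cdot 2^{i-1})$. Edges changed by $\textsc{Recompute}$ at scales $<i$ satisfy this by construction. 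Edges removed because they left $S$, or because they were incident to $x$, also satisfy it by the locality claim. Note that $x$ itself lies in the ball.

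With this in hand, the two cases repeat the insertion argument verbatim. If $(u,v)\notin L$, a shortest $u$--$v$ path in $L^{\rm old}[i]$ either has length $\ge 2\cdot 2^i>(1+\e)\dist(u,v)$ or avoids $B(x,4\cdot2^{i-1})$ and survives in $L[i]$. (The first alternative contradicts Invariant~\ref{inv:stretch} for $L^{\rm old}$.) If $(u,v)\in L$, any path in $L[i]$ shorter than $\dist_{L^{\rm old}[i]}(u,v)$ must use a new edge inside the ball, and so has length $\ge 2\cdot 2^i$. One further subtlety arises because deleting $x$ removes a vertex. A shortest path in $L^{\rm old}[i]$ could pass through $x$, but such a path enters the ball and is long, so it is handled by the same case.

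I expect the main obstacle to be the locality claim for edges leaving or entering $S$. One must confirm that the net-tree spanner's scale-$i$ edges use nets at scale about $i-\log(1/\e)$, so that Lemma~\ref{lem:net-delete-local} places every changed endpoint within $4\cdot 2^{i-1}$ of $x$. Here I would use the constant $c=4+16\e^{-1}$ and pick the net level $j$ with $c2^j\approx 2^i$, checking that $2^j\le 2^{i-1}$.
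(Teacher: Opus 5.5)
Your treatment of the two invariants matches the paper's proof, and that part is fine. A new scale-$i$ edge of $S$ has an endpoint freshly added to the net at level $\approx\log(\varepsilon 2^i)$, so by Lemma~\ref{lem:net-delete-local} both its endpoints lie in $B(x,4\cdot 2^i)$. Hence a scale-$i$ edge with an endpoint outside that ball was already in $S^{\rm old}$ and keeps its $L$-status. All changes to $L[i]$ lie in $B(x,4\cdot 2^{i-1})$, so the short-path argument from the insertion case carries over unchanged.

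The gap is in the step $L\subseteq S$.
\begin{itemize}
\item You assert that, besides edges incident to $x$, edges can leave $S$ because their endpoints change net membership, and you resolve this by altering the algorithm (the convention $L\gets L\cap S$).
\item The lemma concerns the procedure as written, and $\textsc{Recompute}$ only iterates over edges of the \emph{new} $\sparse$. So it would never remove such an edge, and your remark that these edges are ``reexamined by $\textsc{Recompute}(x)$'' is not true.
\item A proof that needs an extra step in the algorithm does not prove the stated lemma.
\end{itemize}

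The missing fact is that the deletion update of \cite{gao2004deformable} (Figure~\ref{fig:sparse-delete}) removes only $x$ from the nets and otherwise only adds points. That is, $N_j^{\rm old}\setminus N_j\subseteq\{x\}$ for every $j$; this is noted at the start of the proof of Lemma~\ref{lem:net-delete-local}. An edge $(u,v)$ lies in the net-tree spanner iff $u,v$ belong to a common $N_j$ with $\dist(u,v)\le c\cdot 2^j$. Therefore every edge of $S^{\rm old}$ not incident to $x$ remains in $S$, and $S^{\rm old}\setminus S$ is exactly the set of edges incident to $x$.

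Consequently the explicit line ``delete every edge in $\light$ incident to $x$'' already performs $L\gets L\cap S$. With this observation your argument becomes the paper's proof and no modification of the algorithm is needed. Without it, you have only established correctness of a different procedure.
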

\begin{proof}
The proof is very similar to that of Lemma~\ref{lem:insert-invariants}.
    It is immediate from the description of the algorithm and Lemma~\ref{lem:deformable} that that $\net$ remains a valid net tree for $\points$ after the deletion procedure, and $\sparse$ is still an $\e$-net-tree spanner of $\net$. Let $\cN^{\rm old}$, $S^{\rm old}$, and $L^{\rm old}$ denote $\net$, $\sparse$, and $\light$ (respectively) before the deletion process. We write $\cN$, $S$, and $L$ instead of $\net$, $\sparse$, and $\light$.

    First we observe that $L \subseteq S$. Indeed, if $(u,v)$ is an edge added to $L$ by $\DS.\textsc{Recompute}(x)$, then $(u,v)$ is in $S$ and we are done. Otherwise, $(u,v)$ is in $L^{\rm old}$ and thus is in $S^{\rm old}$. In this case we are not immediately done, because we no longer have the property that $S^{\rm old} \subseteq S$; nevertheless, observe that the only edges in $S^{\rm old} \setminus S$ are edges incident to $x$. Moreover, $(u,v)$ is some scale-$i$ edge with one endpoint not in $B(x, 4 \cdot 2^i)$,
    meaning $(u,v)$ is not incident to $x$. Thus $(u,v)$ is in $S$.

    Next we show that every scale-$i$ edge $(u,v)$ in $S$ satisfies Invariants~\ref{inv:stretch} and \ref{inv:light}. If $(u,v)$ has both endpoints in $B(x, 4 \cdot 2^i)$, then it is processed by the call to $\DS.\textsc{Recompute}(x)$ and thus satisfies the invariants. So now suppose that $\dist(u, x) > 4 \cdot 2^i$. There are two cases.
    \begin{itemize}
        \item \textbf{\boldmath Suppose $(u,v)$ is not in $L$.} In this case, the description of the algorithm implies that $(u,v)$ is not in $L^{\rm old}$. This means that either $(u,v)$ is in $S^{\rm old} \setminus L^{\rm old}$ or $(u,v)$ was not an edge in $S^{\rm old}$. We claim (and prove below) that the only edges in $S \setminus S^{\rm old}$ at scale $i$ have both endpoints in $B(x, 4 \cdot 2^i)$. This implies that $(u,v)$ was not an edge in $S^{\rm old}$.
        From here, the proof is exactly the same as in Lemma~\ref{lem:insert-invariants}, and we repeat it here. We show that $\dist_{L[i]}(u,v) \le \dist_{L^{\rm old}[i]}(u,v) \le (1+\e) \dist(u,v)$. Indeed, the shortest path $P$ in $L_i^{\rm old}$ between $u$ and $v$ has length at most $(1+\e) \dist(u,v)$ by Invariant~\ref{inv:stretch}, and this path also appears in $L_i$, because the only edges that change between $L_i^{\rm old}$ and $L_i$ are in $B(x, 4 \cdot 2^{i-1})$ and thus are at least distance $2^i$ away from $u$.

        It remains to show that if $(u,v) \in S \setminus S^{\rm old}$ then $u,v \in B(x, 4 \cdot 2^i)$. Because $(u,v)$ is a scale-$i$ edge of $S$, the endpoints $u$ and $v$ belong to $\e 2^i$-net of the net-tree $\net$. In other words, letting $i' \coloneqq \log(\e 2^i)$, both $u$ and $v$ belong to the net $N_{i'}$. Moreover, because $(u,v)$ is not in $S^{\rm old}$, one endpoint (WLOG $u$) is $N_{i'}^{\rm old}$ but not $N_{i'}$. By Lemma~\ref{lem:net-delete-local}, we have $\dist(u,x) \le 2^{i'}$. By triangle inequality, $\dist(v,x) \le 2^{i'} + 2^i < 2 \cdot 2^i$. We conclude that $u,v \in B(x, 4 \cdot 2^i)$ as desired.
        \item \textbf{\boldmath Suppose $(u,v)$ is in $L$.} This case is exactly the same as in the proof of Lemma~\ref{lem:insert-invariants}, and we repeat it. We show that either $\dist_{L[i]}(u,v) \ge \dist_{L^{\rm old}[i]}(u,v)$ or $\dist_{L[i]}(u,v) > 2 \dist(u,v)$.
        If $\dist_{L^{\rm old}[i]}(u,v) \ge \dist_{L[i]}(u,v)$, we are done because $(u,v)$ is in $L^{\rm old}$ and (by Invariant~\ref{inv:stretch}) we have $\dist_{L^{\rm old}[i]}(u,v) > (1+ \frac \e 3) \dist(u,v)$. Otherwise, because the only edges that change between $L^{\rm old}[i]$ and $L[i]$ lie in $B(x,4 \cdot 2^{i-1})$, we conclude that the shortest path between $u$ and $v$ must have length at least $2 \dist(u,v)$. Either way, $(u,v)$ satisfies Invariant~\ref{inv:stretch}.
    \aftermath
        \end{itemize}
\end{proof}

Finally, we pause to remark that in our proof of Lemma~\ref{lem:insert-invariants} (and also Lemma~\ref{lem:delete-invariants}), we have actually proved the following claim, which will be helpful later.
\begin{lemma}
\label{lem:dist-unchanged}
Consider the data structure $\DS$ before and after running $\DS.\textsc{Insert}(x)$ or $\DS.\textsc{Delete}(x)$. Let $S^{\rm old}$ (resp. $L^{\rm old}$) denote $\sparse$ (resp. $\light$) before the insertion/deletion is performed.
Let $(u,v)$ be a scale-$i$ edge in $\sparse$, with some endpoint outside of $B(x, 4\cdot 2^i)$. Then $(u,v)$ is an edge in $S^{\rm old}$, and either (1)
$\dist_{\light[i]}(u,v) = \dist_{L^{\rm old}[i]}(u,v)$, or (2) both $\dist_{\light[i]}(u,v)$ and $\dist_{L^{\rm old}[i]}(u,v)$ are larger than $2 \dist(u,v)$.
\end{lemma}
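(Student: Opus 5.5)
The plan is to extract the argument already carried out inside the proofs of Lemma~\ref{lem:insert-invariants} and Lemma~\ref{lem:delete-invariants}, and phrase it symmetrically. Fix a scale-$i$ edge $(u,v)$ of $\sparse$ with, say, $\dist(u,x) > 4\cdot 2^i$.

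\textbf{Step 1: $(u,v)\in S^{\rm old}$.} For insertion, every new edge of $\sparse$ is incident to $x$: Figure~\ref{fig:sparse-insert} only adds $x$ to nets, so any pair newly within the threshold $c\cdot 2^{j}$ in some $N_j$ must involve $x$. A scale-$i$ edge incident to $x$ has both endpoints in $B(x,2^i)$, which contradicts $\dist(u,x)>4\cdot 2^i$. For deletion, I will reuse the argument from the proof of Lemma~\ref{lem:delete-invariants}. If $(u,v)\in S\setminus S^{\rm old}$, then some endpoint entered $N_{i'}$ with $2^{i'}\le \e 2^i$. Lemma~\ref{lem:net-delete-local} places that endpoint within $2^{i'}$ of $x$, and the triangle inequality then puts both endpoints in $B(x,2\cdot 2^i)$, a contradiction.

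\textbf{Step 2: localize the change in $L[i]$.} The set $L[i]$ consists only of edges of scale $j<i$. By the pseudocode of \textsc{Recompute}, a scale-$j$ edge can change membership only if both its endpoints lie in $B(x,4\cdot 2^j)\subseteq B(x,4\cdot 2^{i-1})$. For deletion we must also account for edges incident to $x$ that are removed or disappear from $\sparse$; these trivially have an endpoint $x\in B(x,4\cdot 2^{i-1})$. Similarly, edges of $L^{\rm old}$ that leave $\sparse$ are incident to $x$, as shown in the proof of Lemma~\ref{lem:delete-invariants}. So every edge in the symmetric difference of $L[i]$ and $L^{\rm old}[i]$ has both endpoints in $B(x,2\cdot 2^i)$. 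By the triangle inequality, each such endpoint is at distance at least $2\cdot 2^i$ from $u$.

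\textbf{Step 3: the dichotomy.} Suppose (2) fails, so one of the two distances is at most $2\dist(u,v)<2\cdot 2^i$. Say $\dist_{L^{\rm old}[i]}(u,v)\le 2\dist(u,v)$; the other case is symmetric. A shortest $u$--$v$ path $P$ in $L^{\rm old}[i]$ has length below $2\cdot 2^i$. Every vertex of $P$ is therefore within geometric distance $<2\cdot 2^i$ of $u$, since the graph is geometric. Hence $P$ uses no edge from the symmetric difference, so $P\subseteq L[i]$ and $\dist_{L[i]}(u,v)\le\dist_{L^{\rm old}[i]}(u,v)$. In particular $\dist_{L[i]}(u,v)\le 2\dist(u,v)$, and running the same argument from $L[i]$ gives the reverse inequality. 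This yields equality, i.e.\ case (1).

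\textbf{Expected obstacle.} The only delicate point is the boundary bookkeeping in Step 2. The edges that change are those in $B(x,4\cdot 2^{j})$ for $j\le i-1$, and the strict inequality ``length $<2\cdot 2^i$'' must be matched against ``distance $\ge 2\cdot 2^i$''. The deletion case also needs its extra edge changes confirmed to lie near $x$. I would handle the boundary by noting that $\dist(u,x)>4\cdot 2^i$ is strict.
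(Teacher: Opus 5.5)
Your proposal is correct and follows the paper's approach: the paper proves this lemma only by noting that the proofs of Lemmas~\ref{lem:insert-invariants} and~\ref{lem:delete-invariants} already establish two facts. First, $(u,v)\in S^{\rm old}$. Second, every change to $L[i]$ lies in $B(x,4\cdot 2^{i-1})$, which is at distance more than $2\cdot 2^i$ from $u$, so a $u$--$v$ path of length at most $2\dist(u,v)$ in either graph also survives in the other. Your Step~3 makes the two-directional version of this argument explicit, whereas the paper splits it across its two cases.
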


\section{Implementing Updates Quickly}
\label{S:distance-oracle}

In this section, we show how to modify the data structure so that the $\DS.\textsc{Insert}$ and $\DS.\textsc{Delete}$ procedures can be executed quickly. By prior work on the net-tree spanner (Lemma~\ref{lem:deformable}), the net-tree $\net$ and the $\e$-net-tree spanner $\sparse$ can be maintained in $O(\log \Phi)$ time. The $\DS.\textsc{Recompute}$ procedure iterates over each of the $\log \Phi$ scales. For each scale $i$, it examines the $\e^{-O(\ddim)} \log \Phi$ many scale-$i$ edges in $B(x,4 \cdot 2^i)$; by Corollary~\ref{cor:expanded-fast-ball} these edges can be found in $\e^{-O(\ddim)} = O(1)$ time. 
Each examined edge $(u,v)$ is added or removed from $\light$ depending on $\dist^*(u,v)$. This means that the bottleneck for a fast update time is the computation of $\dist^*(u,v)$:

\begin{lemma}
\label{lem:basic-runtime}
    The procedure $\DS.\textsc{Insert}$ and $\DS.\textsc{Delete}$ run in $\e^{-O(\ddim)} \cdot\tilde O (\log \Phi)$ time, plus the time needed for $\e^{-O(\ddim)} \cdot \log \Phi$ many computations of $\dist^*(u,v)$.
\end{lemma}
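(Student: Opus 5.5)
We prove Lemma~\ref{lem:basic-runtime} by walking through $\DS.\textsc{Insert}(x)$ and $\DS.\textsc{Delete}(x)$ line by line and charging each step. We treat every evaluation of $\dist^*(u,v)$ as an oracle call, and we count those calls separately.

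\textbf{Maintaining the net tree and spanner.} The first step is updating $\net$ and $\sparse$. By Lemma~\ref{lem:deformable} this takes $O(\log \Phi)$ time. That bound also covers keeping an auxiliary $\e/8$-net-tree spanner on the same net tree, which is what Corollary~\ref{cor:expanded-fast-ball} needs.

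\textbf{Deleting edges incident to $x$.} In a deletion, we next remove every edge of $\light$ incident to $x$. Since $\light \subseteq \sparse$, these edges are among the $\sparse$-edges at $x$. By the packing bound (Observation~\ref{obs:packing}), $x$ has $\e^{-O(\ddim)}$ such edges at each scale, so $\e^{-O(\ddim)}\log\Phi$ in total. We store $\light$ as adjacency lists inside a balanced search tree, or alternatively as a hash set keyed by the pair of endpoints. Each removal then costs $O(\log n)$ time, and since $\Phi \ge n^{1/\ddim}$ this is $O(\log \Phi)$. The total for this step is $\e^{-O(\ddim)}\tilde O(\log\Phi)$.

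\textbf{Running $\textsc{Recompute}(x)$.} Fix a scale $i$. Every scale-$i$ edge of $\sparse$ has both endpoints in the net $N_{i'}$ with $2^{i'} = \Theta(\e 2^i)$.
\begin{itemize}
\item The required endpoints lie in $B(x,4\cdot 2^i)$. Since $4\cdot 2^i \le 10\, c_\e 2^{i'}$, Corollary~\ref{cor:expanded-fast-ball} lists all points of $N_{i'}\cap B(x,4\cdot 2^i)$ in $\e^{-O(\ddim)}$ time.
\item By Observation~\ref{obs:packing}, there are $\e^{-O(\ddim)}$ such points. For each one we list its scale-$i$ edges in $\sparse$, and keep exactly those whose other endpoint also lies in the ball.
\item This yields $\e^{-O(\ddim)}$ candidate edges at scale $i$. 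For each candidate we make one $\dist^*$ query, then perform one insertion or deletion in $\light$ at cost $O(\log \Phi)$.
\end{itemize}
Summing over the $\log\Phi+1$ scales gives $\e^{-O(\ddim)}\tilde O(\log\Phi)$ time, plus $\e^{-O(\ddim)}\log\Phi$ calls to the $\dist^*$ oracle.

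\textbf{Main obstacle.} The only delicate point is the bookkeeping: we must be able to enumerate the scale-$i$ edges of a given net point in time $\e^{-O(\ddim)}$, without scanning all of its $\e^{-O(\ddim)}\log\Phi$ incident edges. We can either store each vertex's $\sparse$-edges bucketed by scale, or accept an extra $\log\Phi$ factor here. In both cases the bound stays within $\e^{-O(\ddim)}\tilde O(\log\Phi)$, because the $\tilde O$ in the statement absorbs polylogarithmic overhead. With that, adding up the steps above gives the claimed bound.
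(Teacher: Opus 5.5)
Your argument follows the same route as the paper's short justification. You use Lemma~\ref{lem:deformable} to update $\net$ and $\sparse$. Then, at each of the $\log\Phi+1$ scales, the packing bound and Corollary~\ref{cor:expanded-fast-ball} enumerate the $\varepsilon^{-O(\ddim)}$ scale-$i$ edges of $\sparse$ with both endpoints in $B(x,4\cdot 2^i)$, and each costs one evaluation of $\dist^*$. You are in fact more explicit than the paper in two places. You account for removing the $\light$-edges at $x$ in a deletion. Your per-scale count of $\varepsilon^{-O(\ddim)}$ edges is also the one consistent with the $\varepsilon^{-O(\ddim)}\log\Phi$ oracle calls in the statement, whereas the paper's text says $\varepsilon^{-O(\ddim)}\log\Phi$ per scale.

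The step that does not work as written is the final accounting. Take the costs you actually state: $O(\log n)=O(\log\Phi)$ for each of the $\varepsilon^{-O(\ddim)}\log\Phi$ insertions/removals in $\light$, or, as a fallback, scanning all $\varepsilon^{-O(\ddim)}\log\Phi$ incident edges of each candidate point at every scale. Either way the total comes to $\varepsilon^{-O(\ddim)}\log^2\Phi$. This is not $\varepsilon^{-O(\ddim)}\cdot\tilde O(\log\Phi)$. The tilde hides lower-order factors such as $\log\log\Phi$, not a whole extra factor of $\log\Phi$; if it absorbed arbitrary $\mathrm{polylog}\,\Phi$ factors, writing $\log\Phi$ inside it would be meaningless.

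The repair is routine. For each vertex and each scale, keep the list of its $\varepsilon^{-O(\ddim)}$ incident scale-$i$ edges of $\sparse$, with a bit recording membership in $\light$. Update these lists whenever $\sparse$ changes; this adds no asymptotic cost. Then enumerating the candidates at a scale and toggling each one costs $\varepsilon^{-O(\ddim)}$, and deleting the $\light$-edges at $x$ costs $\varepsilon^{-O(\ddim)}\log\Phi$. Hashing would also suffice, as would a per-vertex balanced tree over its $\varepsilon^{-O(\ddim)}\log\Phi$ incident edges, at cost $O(\log\log\Phi+\ddim\log(1/\varepsilon))$ per operation. This yields the stated bound.

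A harmless aside: packing gives $n\le\Phi^{O(\ddim)}$ rather than exactly $\Phi\ge n^{1/\ddim}$. You only need $\log n=O(\ddim\log\Phi)$, so nothing breaks.
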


Unfortunately, we don't know how to design a data structure that maintains these distances. To get around this, we first observe that it suffices to maintain an \emph{approximation} for $\dist^*(u,v)$.

\begin{definition}
We say a value $\tilde d$ is an \EMPH{$\alpha$-approximation} for a value $d$ if $d \le \tilde d \le \alpha \cdot d$.
     For any function $d(u,v)$ on pairs of points (we will take $d(u,v) = \dist^*(u,v)$ and $d(u,v) = \dist_{\light}(u,v)$), we say that $\tilde d(u,v)$ is a \EMPH{coarse $\alpha$-approximation} for $d(u,v)$ if:
     \[
\tilde{d}(u,v)=\begin{cases}
\text{if }d(u,v)\le2\delta(u,v)\text{ then } & d(u,v)\le\tilde{d}(u,v)\le\alpha\cdot d(u,v)\\
\text{if }d(u,v)>2\delta(u,v)\text{ then } & \tilde{d}(u,v)\ge2\cdot\delta(u,v)
\end{cases}
\]
\end{definition}
In other words, a coarse $\alpha$-approximation $\tilde d(u,v)$ for $\dist^*(u,v)$ lets us either detect when $\dist^*(u,v)$ is very large or provides an $\alpha$-approximation for $\dist^*(u,v)$. Throughout this section, we set $\alpha \coloneqq 1 + \frac \e 3$. As we show in Lemma~\ref{lem:fast-invariants}, even if we used coarse $\alpha$-approximations $\tilde \dist^*(u,v)$ instead of the real distances $\dist^*(u,v)$ in the $\DS.\textsc{Recompute}$ procedure, our spanner $L$ still satisfies Invariants~\ref{inv:stretch} and \ref{inv:light} after the \DS.\textsc{Insert} and \DS.\textsc{Delete} procedures.

\paragraph{Augmenting the data structure with distance estimates.} We now show that we can modify our data structure to efficiently maintain coarse $\alpha$-approximations of $\dist^*(u,v)$ for every edge $(u,v)$ in the $\e$-net-tree $\sparse$. In fact, we maintain something stronger: we need to store several auxiliary values in order to maintain the coarse approximations.
Let \EMPH{$\kappa$} $\ge \frac{1024}{3}$ be a sufficiently large constant, and let \EMPH{$\e_{\rm small}$} $\gets \frac{\e}{3 \cdot \kappa \cdot \log \Phi}$.  Our data structure $\DS$ maintains (in addition to the points $\points$, the net-tree $\net$, the $\e$-net-tree spanner $\sparse$, and the light spanner $\light$) the following objects:

\begin{itemize}
    \item An  $\e_{\rm small}$-net tree spanner \EMPH{$\DS.S_{\rm small}$} for $\net$
    \item For every scale $i \in [\log \Phi]$ and every edge $(u,v)$ in $\DS.S_{\rm small}$ at scale $i$, a value \EMPH
    {$\DS.\tilde \dist^*(u,v)$} which is a \emph{coarse} $(1 + \kappa\cdot i \cdot \e_{\rm small})$-approximation for $\dist^*(u,v)$.
    \item For every scale $i \in [(\log \Phi) - 2]$ and every edge $(u,v)$ in $\DS.S_{\rm small}$ at scale $i$, a value \EMPH
    {$\DS.\tilde \dist_L(u,v)$} which is a $(1+\kappa\cdot i \cdot \e_{\rm small})$-approximation for $\dist_{\light}(u,v)$.
\end{itemize}

Observe that this data structure does indeed maintain a coarse $\frac \e 3$-approximation for every edge in $\DS.S$: this is because $\DS.S \subset \DS.S_{\rm small}$,
and $\kappa \cdot i \cdot \e_{\rm small} < \frac \e 3$ so the estimates $\DS.\tilde \dist^*(u,v)$ are coarse $\frac \e 3$-approximations for $\dist^*(u,v)$. The reason we maintain estimates for every edge in $\DS.S_{\rm small}$ and not just every edge in $\DS.S$ is to control the distortion, which will accumulate by a $(1+O(\e_{\rm small}))$ factor at each scale. 

We highlight two differences between the estimates $\DS.\tilde \dist^*(u,v)$ and $\DS.\tilde \dist_L(u,v)$, for a scale-$i$ edge $(u,v)$. First, the value $\DS.\tilde \dist^*(u,v)$ seeks to estimate $\dist^*(u,v)$, which is the distance between $u$ and $v$ \ul{in the subgraph $\light[i]$} which consists of all edges in $\light$ with scale strictly smaller than $i$. On the other hand, $\DS.\tilde \dist_L(u,v)$ estimates distances \ul{in the graph $\light$}; note that the shortest path between $u$ and $v$ in $\light$ may include edges in $\light$ at scale $i$ and $i+1$, in addition to edges in $\light[i]$ (see Observation~\ref{obs:small-L}). The second key difference is that $\DS.\tilde \dist^*(u,v)$ is only required to be a $(1+\kappa \cdot i \cdot \e_{\rm small})$-approximation of $\dist^*(u,v)$ when $\dist^*(u,v) \le 2 \dist(u,v)$, whereas $\DS.\tilde \dist_L(u,v)$ is always a $(1+\kappa \cdot i \cdot \e_{\rm small})$-approximation of $\dist_\light(u,v)$.
Why do we need these two types of estimates? Ultimately, our algorithm for \textsc{Insert} and \textsc{Delete} will only use the estimates $\DS.\tilde \dist^*(u,v)$, to determine whether or not the edge $(u,v)$ should be added. But in order to compute $\DS.\tilde \dist^*(u,v)$ for some pair $(u,v)$ at scale $i$, we will need to estimate $\dist_\light(u',v')$ for pairs $(u',v')$ at scale $\le i-3$.

\paragraph{The fast insert/delete procedures.}
Below, we give  pseudocode for a modified $\DS.\textsc{Recompute}$ procedure called \EMPH{$\DS.\textsc{RecomputeFast}$}, which uses the estimates $\DS.\tilde \dist^*(u,v)$ instead of the true distances $\dist^*(u,v)$. We define the procedures \EMPH{$\DS.\textsc{InsertFast}$} and \EMPH{$\DS.\textsc{DeleteFast}$} to be identical to $\DS.\textsc{Insert}$ and $\DS.\textsc{Delete}$, except that they use $\DS.\textsc{RecomputeFast}$ instead of $\DS.\textsc{Recompute}$, and they also maintain a $\e_{\rm small}$-net-tree spanner $\DS.S_{\rm small}$ in addition to the $\e$-net-tree spanner $\sparse$.
\begin{figure}[h!]
\centering
\begin{algorithm}
    \ul{$\texttt{DS}.\textsc{RecomputeFast}(x)$:}\+
    \\ for every scale $i \gets 0, 1, \ldots, \log \phi$:\+
    \\      $\DS.\textsc{UpdateDistEstimates}(x,i)$ \Comment{recomputes estimates $\DS.\tilde \dist^*(\cdot,\cdot)$ and $\DS.\tilde \dist_L(\cdot,\cdot)$.}
    \\      for every scale-$i$ edge $(u,v)$ of $\sparse$ with $u,v \in B(x, 8 \cdot 2^i)$:\+
    \\          $\tilde d \gets \DS.\tilde \dist^*(u,v) $ \Comment{a $\frac \e 3$-good estimate for $\dist^*(u,v)$}
        \\          if $\tilde d > (1+\e) \cdot \dist(u,v)$:\+
    \\  $\light \gets \light \cup \set{(u,v)}$\-
    \\ else:\+
    \\ $\light \gets \light \setminus \set{(u,v)}$\-
\end{algorithm}
\label{fig:recompute-fast}
\end{figure}

In Section~\ref{SS:update-dists}, we describe the procedure \DS.\textsc{UpdateDistEstimates} and prove that it correctly maintains estimates for $\dist^*(u,v)$ and $\dist_L(u,v)$.
For now, we just state lemmas which summarize the guarantees of \DS.\textsc{UpdateDistEstimates} (Lemmas~\ref{lem:dist-fast} and \ref{lem:dist-correct}). Assuming the lemmas hold, we prove the correctness of the fast insert/delete procedures (Lemma~\ref{lem:fast-invariants}).
\begin{lemma}
    \label{lem:dist-fast}
    $\DS.\textsc{UpdateDistEstimates}$ runs in time $(\e^{-1} \cdot  \log \Phi)^{O(\ddim)}$.
\end{lemma}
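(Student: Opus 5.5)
Since \DS.\textsc{UpdateDistEstimates} has not yet been specified, the plan fixes a natural implementation and bounds its running time. For a given update point $x$ and scale $i$, the procedure recomputes $\DS.\tilde\dist^*(u,v)$ and $\DS.\tilde\dist_L(u,v)$ only for scale-$i$ edges $(u,v)$ of $\DS.S_{\rm small}$ with both endpoints in $B(x, c\cdot 2^i)$ for a suitable constant $c$ (say $c=16$). This mirrors the locality argument of Lemma~\ref{lem:dist-unchanged}: edges farther from $x$ keep their old distances up to the coarse threshold, so their estimates stay valid. The bound is then a product of two factors: the number of edges touched, and the cost of recomputing one estimate.

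For the first factor, scale-$i$ edges of $\DS.S_{\rm small}$ have endpoints in the $\e_{\rm small}2^i$-net. Applying the packing bound (Observation~\ref{obs:packing}) with $\Delta=\e_{\rm small}2^i$ and $\alpha = c/\e_{\rm small}$, there are $\e_{\rm small}^{-O(\ddim)}$ such endpoints in $B(x,c2^i)$. Hence there are $\e_{\rm small}^{-O(\ddim)}=(\e^{-1}\log\Phi)^{O(\ddim)}$ edges, using $\e_{\rm small}=\e/(3\kappa\log\Phi)$. These edges can be enumerated within the same time by querying the $\e_{\rm small}$-net-tree structure via Claim~\ref{clm:fast-ball} and Corollary~\ref{cor:expanded-fast-ball}, applied with $\e_{\rm small}$ in place of $\e$, whose query cost is $\e_{\rm small}^{-O(\ddim)}$.

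For the second factor, the estimate for a scale-$i$ pair $(u,v)$ is computed by building an auxiliary graph $H_{u,v}$. Its vertices are the net points of the $\e_{\rm small}2^{i}$-net (or a slightly coarser net) within $B(u, 4\cdot 2^i)$. Its edges are the pairs $(u',v')$ of $\DS.S_{\rm small}$ at scale at most $i-3$ among these vertices, weighted by the stored values $\DS.\tilde\dist_L(u',v')$, together with the $\light$ edges of scales $i-2, i-1$ inside the ball (for $\tilde\dist^*$), or also those of scales $i, i+1$ (for $\tilde\dist_L$), weighted by their true lengths. The procedure runs Dijkstra on $H_{u,v}$, and outputs the result if it is at most $2\dist(u,v)$ and otherwise any value at least $2\dist(u,v)$. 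By packing, $H_{u,v}$ has $\e_{\rm small}^{-O(\ddim)}$ vertices and edges, so Dijkstra costs $(\e^{-1}\log\Phi)^{O(\ddim)}$. The stored values $\tilde\dist_L$ at scales $\le i-3$ are already up to date, because the outer loop processes scales in increasing order.

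Multiplying the two factors gives $(\e^{-1}\log\Phi)^{O(\ddim)}$ per call, as claimed. The main obstacle is making sure each recomputation really only inspects a packing-bounded neighborhood. A shortest path of length at most $2\dist(u,v)$ stays inside $B(u,2\cdot 2^i)$, which justifies truncating $H_{u,v}$ to that ball. Snapping path vertices to net points costs only additive $O(\e_{\rm small}2^i)$, which is affordable, so correctness of this truncation is deferred to Lemma~\ref{lem:dist-correct}. For the running time, only the size bound on the ball, obtained from the packing bound, is needed.
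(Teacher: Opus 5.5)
Your argument is essentially the paper's: a packing bound on the number of pairs near $x$ whose estimates are refreshed, times a packing bound on the size of each sketch graph (vertices enumerated via Claim~\ref{clm:fast-ball}/Corollary~\ref{cor:expanded-fast-ball} at granularity $\e_{\rm small}$) plus one shortest-path computation. It carries over directly to the paper's actual procedure (scale-$i$ and scale-$(i-2)$ edges of $\DS.S_{\rm small}$ in $B(x,4\cdot 2^i)$, sketch graph on $N_{i'}\cap B(x,7\cdot 2^i)$), and your $\e_{\rm small}^{-O(\ddim)}$ counts are more accurate than the paper's stated $\e^{-O(\ddim)}\log\Phi$ calls and $\e^{-O(\ddim)}$ enumeration time, without changing the final bound.

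As an aside that does not affect running time: your variant builds $\tilde{\dist}_L$ for scale-$i$ pairs from scale-$i$ and scale-$(i+1)$ edges of $\light$, which are not yet decided at iteration $i$, so your remark that the stored $\tilde{\dist}_L$ values are up to date fails for that variant; this is why the paper refreshes $\tilde{\dist}_L$ only for scale-$(i-2)$ pairs at iteration $i$.
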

\begin{restatable}{lemma}{distCorrect}
\label{lem:dist-correct}
    Consider the execution of $\DS.\textsc{InsertFast}(x)$ or $\DS.\textsc{DeleteFast}(x)$.
    Let $i\in [\log \Phi]$ be a scale, and consider the data structure $\DS$ at the instant after the execution of the call $\DS.\textsc{UpdateDistEstimates}(x,i)$ during the insertion/deletion process. Assume that every edge $(u,v)$ of $\sparse$ with scale strictly smaller than $i$ satisfies Invariants~\ref{inv:stretch} and \ref{inv:light}. Then:
    \begin{itemize}
        \item for every scale-$i$ edge $(u,v)$ in the $\e_{\rm small}$-net-tree spanner $\DS.S_{\rm small}$,
        the value $\DS.\tilde \dist^*(u,v)$ is a
        coarse $(1 + \kappa \cdot i \cdot \e_{\rm small})$-approximation for $\dist^*(u,v)$.
        \item for every scale-$(i-2)$ edge $(u,v)$ in the $\e_{\rm small}$-net-tree spanner $\DS.S_{\rm small}$, the value $\DS.\tilde \dist_L(u,v)$ is a $(1 + \kappa \cdot i \cdot \e_{\rm small})$-approximation $\dist_\light(u,v)$.
    \end{itemize}
\end{restatable}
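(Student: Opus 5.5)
The plan is to first fix what $\DS.\textsc{UpdateDistEstimates}(x,i)$ computes and then prove the lemma by induction on the scale $i$. The induction runs along the order in which $\DS.\textsc{RecomputeFast}$ processes scales. The inductive hypothesis is that, at the moment $\DS.\textsc{UpdateDistEstimates}(x,i)$ is called, every stored estimate $\DS.\tilde\dist_L(u',v')$ for $\DS.S_{\rm small}$-edges at scale at most $i-3$ is a $(1+\kappa(i-1)\e_{\rm small})$-approximation of $\dist_\light(u',v')$.

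Concretely, for a target pair $(u,v)$ of scale $i$ (for $\tilde\dist^*$) or scale $i-2$ (for $\tilde\dist_L$), I build an auxiliary weighted graph $H_{u,v}$ as follows.
\begin{itemize}
\item Its vertices are the points of the net level $N_j$ with $2^j \approx \e_{\rm small}2^{i-3}$ inside $B(u, O(2^i))$, which Corollary~\ref{cor:expanded-fast-ball} lets us enumerate.
\item Its edges are of two kinds. The first kind is every $\light$-edge of scale $i-2$ or $i-1$ in that ball, with weight equal to its length; for $\tilde\dist_L$ we also include scale-$i$ and scale-$(i+1)$ edges, by Observation~\ref{obs:small-L}. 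The second kind is every $\DS.S_{\rm small}$-edge of scale at most $i-3$ in the ball, with weight equal to its stored estimate $\DS.\tilde\dist_L$.
\end{itemize}
We then run Dijkstra from $u$, truncated at distance $2\dist(u,v)$, and output the distance found, or $2\dist(u,v)$ if $v$ is not reached.

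By the packing bound (Observation~\ref{obs:packing}), $H_{u,v}$ has $(\e_{\rm small}^{-1})^{O(\ddim)}=(\e^{-1}\log\Phi)^{O(\ddim)}$ vertices. The same bound controls the $O(1)$ pairs we must recompute near $x$ at scale $i$, namely those with both endpoints in $B(x,O(2^i))$, which gives Lemma~\ref{lem:dist-fast}. Pairs farther from $x$ keep their old estimates. For those I will invoke Lemma~\ref{lem:dist-unchanged}, together with its straightforward analogue for $\dist_\light$ (the only changed edges of scale at most $i+1$ lie in $B(x,O(2^{i}))$), to argue that the true distance either did not change or is larger than $2\dist(u,v)$ both before and after. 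In the second case, a coarse estimate stays valid.

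The lower bound $\tilde d\ge d$ is direct. Every edge weight in $H_{u,v}$ is at least the $\light$-distance between its endpoints: for the $\DS.\tilde\dist_L$-weighted edges this is the inductive hypothesis. Moreover, every $\light$-edge used lies in the relevant subgraph ($\light[i]$ for $\tilde\dist^*$, or $\light$ for $\tilde\dist_L$). Any $u$–$v$ path in $H_{u,v}$ therefore upper-bounds the corresponding true distance. If the true distance exceeds $2\dist(u,v)$, the truncated Dijkstra returns at least $2\dist(u,v)$, which is exactly the coarse condition.

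For the upper bound, take a shortest path $P$ of length at most $2\dist(u,v)$ in the relevant subgraph. Because edges of scale at least $i-2$ in $S\subseteq S_{\rm small}$ have endpoints in an $\e2^{i-3}$-net, and that net lies inside $N_j$, those edges already appear in $H_{u,v}$ with exact weight. Each maximal subpath $Q$ consisting only of edges of scale at most $i-3$ is cut into $O(1)$ pieces of length about $2^{i-4}$. I snap every breakpoint to its nearest point in $N_j$, at additive cost $O(\e_{\rm small}2^{i})$ per breakpoint. By Lemma~\ref{lem:net-spanner}, consecutive snapped points are joined by a $(1+\e_{\rm small})$-stretch path of $S_{\rm small}$-edges of scale at most $i-3$. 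Each of these edges has a stored estimate at most $(1+\kappa(i-1)\e_{\rm small})$ times its $\light$-distance. In turn, that $\light$-distance is at most $(1+3\e)$ times the metric distance, by Lemma~\ref{lem:inv-implies-stretch} applied using the assumed invariants at smaller scales. That bound is too crude, though. Instead, I route via $Q$ itself: the snapped segment's $\light$-distance is at most the length of the corresponding piece of $Q$ plus $O(\e_{\rm small}2^i)$.

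Summing over the pieces, the total is at most $(1+\kappa(i-1)\e_{\rm small})\bigl(|P|+O(\e_{\rm small}2^i)\bigr)$. Since $|P|\ge\dist(u,v)\ge 2^{i-3}$, this is at most $(1+\kappa i\e_{\rm small})|P|$ once $\kappa$ exceeds the hidden constant; this is where $\kappa\ge 1024/3$ should come from.

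The main obstacle is this last step. The multiplicative error must grow by only an additive $O(\e_{\rm small})$ per scale, not a multiplicative constant, and the snapping and net-tree-spanner detours must not be applied recursively inside already-approximated estimates. I would handle this by charging all snapping errors at scale $i$ against $\dist(u,v)=\Theta(2^i)$, and by making sure the lower-scale estimates are only ever used as black-box approximations of true $\light$-distances, never of metric distances.
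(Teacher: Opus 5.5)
Your proposal is essentially the paper's proof. It uses the same induction over scales and the same sketch graph: exact $L$-edges of scales $i-1$ and $i-2$, plus net-point pairs of scale at most $i-3$ weighted by their stored $\tilde\dist_L$ values. It also has the same lower bound, the same $O(1)$-piece decomposition of a shortest path with snapping to net points for the upper bound, and the same appeal to Lemma~\ref{lem:dist-unchanged} for pairs far from $x$. Two points need stating cleanly: consecutive snapped points are two points of $N_j$ at distance below $2^{i-3}$, hence a single $S_{\rm small}$-edge, i.e.\ one edge of $H$ weighted by its stored estimate (a multi-hop path from Lemma~\ref{lem:net-spanner} would not let you charge to the corresponding piece of $Q$); and Observation~\ref{obs:small-L} gives $\dist_L(a,b)=\dist_{L[i]}(a,b)<2\dist(a,b)$ for pairs of scale at most $i-2$, which justifies your lower bound for $\tilde\dist^*$, makes the scale-$i$ and scale-$(i+1)$ edges unnecessary, and makes the $\tilde\dist_L$ estimates non-coarse.
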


\begin{lemma}
\label{lem:fast-invariants}
    The procedure $\DS.\textsc{InsertFast}(x)$ (resp. $\DS.\textsc{DeleteFast}(x)$) correctly implements insertion (resp. deletion) of $x$. It runs in time $(\e^{-1}\cdot \log \Phi)^{O(\ddim)}$.
\end{lemma}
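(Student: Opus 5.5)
The argument mirrors the proofs of Lemmas~\ref{lem:insert-invariants} and \ref{lem:delete-invariants}, with the exact distances $\dist^*(u,v)$ replaced by the estimates $\DS.\tilde\dist^*(u,v)$. We argue by induction on the scale $i$ processed by the outer loop of $\DS.\textsc{RecomputeFast}(x)$. The inductive hypothesis is that, once scale $i-1$ has been processed, every edge of $\sparse$ of scale $<i$ satisfies Invariants~\ref{inv:stretch} and \ref{inv:light}. Since $\dist^*(u,v)=\dist_{\light[i]}(u,v)$ depends only on edges of scale $<i$, and these are never modified after their iteration, it suffices to check each scale-$i$ edge at the moment its iteration ends.

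For scale $i$, the hypothesis of Lemma~\ref{lem:dist-correct} holds by induction. Hence, after $\DS.\textsc{UpdateDistEstimates}(x,i)$, every scale-$i$ edge of $\sparse \subseteq \DS.S_{\rm small}$ carries a coarse $(1+\kappa i\e_{\rm small})$-approximation. Since $\kappa i \e_{\rm small} \le \e/3$, this is also a coarse $\alpha$-approximation with $\alpha = 1+\frac\e3$.

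Consider first a scale-$i$ edge $(u,v)$ with $u,v\in B(x,8\cdot 2^i)$, which is handled explicitly.
\begin{itemize}
\item If the edge is added, then $\tilde d>(1+\e)\dist(u,v)$. Either $\dist^*(u,v)>2\dist(u,v)$, or $\dist^*(u,v)\ge \tilde d/\alpha > \frac{1+\e}{1+\e/3}\dist(u,v)\ge (1+\frac\e3)\dist(u,v)$ for small $\e$. In both cases Invariant~\ref{inv:light} holds.
\item If the edge is removed, then $\tilde d\le (1+\e)\dist(u,v)<2\dist(u,v)$. The coarse definition then forces $\dist^*(u,v)\le 2\dist(u,v)$, because otherwise $\tilde d\ge 2\dist(u,v)$. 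Hence $\dist^*(u,v)\le \tilde d\le(1+\e)\dist(u,v)$, which gives Invariant~\ref{inv:stretch}.
\end{itemize}

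Now consider edges with an endpoint outside $B(x,8\cdot 2^i)$. Their membership in $\light$ is unchanged, so we reuse the locality argument. The containment $L\subseteq S$ and the fact that such an edge was already in $S^{\rm old}$ (Lemma~\ref{lem:net-delete-local}) carry over verbatim. Edges of $\light$ at scale $<i$ change only inside $B(x,8\cdot 2^{i-1})=B(x,4\cdot 2^i)$. Since $\dist(u,x)>8\cdot 2^i$, any path from $u$ through a changed edge has length at least $4\cdot 2^i>2\dist(u,v)$.

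This gives the dichotomy of Lemma~\ref{lem:dist-unchanged}: either $\dist_{\light[i]}(u,v)=\dist_{L^{\rm old}[i]}(u,v)$, or both exceed $2\dist(u,v)$. Invariants~\ref{inv:stretch} and \ref{inv:light} therefore transfer from $L^{\rm old}$ exactly as in the two cases of Lemma~\ref{lem:insert-invariants}. The radius $8\cdot 2^i$ is larger than in the slow version, but this only leaves more slack in the argument. The deletion step that removes edges incident to $x$ also touches only edges inside the recomputed balls, since those edges are incident to $x$.

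For the running time, we start from Lemma~\ref{lem:deformable}. The net tree, $\sparse$, and $\DS.S_{\rm small}$ are all maintained over the same $\net$ in $O(\log\Phi)$ time each. At each of the $\log\Phi+1$ scales we do two things:
\begin{itemize}
\item one call to $\DS.\textsc{UpdateDistEstimates}$, costing $(\e^{-1}\log\Phi)^{O(\ddim)}$ by Lemma~\ref{lem:dist-fast};
\item enumeration of the scale-$i$ edges of $\sparse$ within $B(x,8\cdot2^i)$, of which there are $\e^{-O(\ddim)}$ by the packing bound (Observation~\ref{obs:packing}), found via Corollary~\ref{cor:expanded-fast-ball} with $O(1)$ work each.
\end{itemize}
Summing over scales gives the claimed $(\e^{-1}\log\Phi)^{O(\ddim)}$ bound.

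The main obstacle is the inductive bookkeeping. Lemma~\ref{lem:dist-correct} is applied mid-update, and its hypothesis must hold at exactly that instant. This includes edges of scale $<i$ outside the recompute balls, whose invariants rely on the locality argument above having been completed for smaller scales. The scales must therefore be interleaved carefully in the induction.
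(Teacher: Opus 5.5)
Your proposal is correct and follows essentially the paper's route: induction over scales so that Lemma~\ref{lem:dist-correct} gives a coarse $(1+\frac{\varepsilon}{3})$-approximation, the same two-case check for added versus removed edges, and the locality argument of Lemmas~\ref{lem:insert-invariants} and~\ref{lem:delete-invariants} for all other edges (you use the radius $8\cdot 2^i$ from the \textsc{RecomputeFast} pseudocode where the paper's proof speaks of $B(x,4\cdot 2^i)$, which is harmless since Lemma~\ref{lem:dist-correct} is stated for every scale-$i$ edge). One small slip in the runtime: $\varepsilon_{\rm small}=\Theta(\varepsilon/\log\Phi)$ is not a constant, so maintaining $\DS.S_{\rm small}$ costs $\varepsilon_{\rm small}^{-O(\ddim)}\log\Phi$ rather than $O(\log\Phi)$; this is still within $(\varepsilon^{-1}\log\Phi)^{O(\ddim)}$, so the bound stands.
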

\begin{proof}
    The runtime bound follows from Lemmas~\ref{lem:basic-runtime} and \ref{lem:dist-fast}. The proof of correctness is almost identical to that of Lemmas \ref{lem:insert-invariants} and \ref{lem:delete-invariants}
    except in one place: we need to argue that every scale-$i$ edge $(u,v)$ in $\sparse$ with both endpoints in $B(x, 4 \cdot 2^i)$ satisfies Invariants~\ref{inv:stretch} and \ref{inv:light}. 
    
    For every such edge $(u,v)$, the algorithm checks if $\DS.\tilde \dist^*(u,v) > (1+\e) \dist(u,v)$ and includes $(u,v)$ in $L$ iff the inequality holds. At the instant the algorithm makes the check, we may assume (by induction) that every edge $(u',v')$ in $\sparse$ with scale strictly smaller than $i$ satisfies Invariants~\ref{inv:stretch} and \ref{inv:light}. Thus, by Lemma~\ref{lem:dist-correct}, the value of $\DS.\tilde \dist(u,v)$ is a coarse $(1 + \kappa \cdot i \cdot \e_{\rm small})$-approximation for $\dist^*(u,v)$. As $i \in [\log \Phi]$, in particular we conclude that $\DS.\tilde \dist(u,v)$  is a coarse $(1+\frac \e 3)$-approximation for $\dist^*(u,v)$.
    There are two cases.

    \medskip \noindent \textbf{\boldmath Case 1: $\DS.\tilde \dist (u,v) \le (1+\e) \dist(u,v)$.} In this case, $(u,v)$ is not in $\light$. We claim that $\dist^*(u,v) \le (1+\e) \dist(u,v)$, so that $(u,v)$ satisfies Invariant~\ref{inv:stretch}. Indeed, we have $\dist^*(u,v) < 2 \dist(u,v)$; otherwise, the definition of coarse $\alpha$-approximation means that $\DS.\tilde \dist(u,v) > 2\dist(u,v)$, which contradicts our assumption that $\DS.\tilde \dist(u,v) \le (1+\e) \dist(u,v)$. Thus, the definition of coarse $\alpha$-approximation implies $\dist^*(u,v) \le \DS.\tilde \dist(u,v)$ and so $\dist^*(u,v) \le (1+\e) \dist(u,v)$ as desired.

    \medskip \noindent \textbf{\boldmath Case 2: $\DS.\tilde \dist (u,v) > (1+\e) \dist(u,v)$.} In this case, $(u,v)$ is added to $\light$. We claim that $\dist^*(u,v) > (1+\frac{\e}{3}) \dist(u,v)$, so that $(u,v)$ satisfies Invariant~\ref{inv:light}. For contradiction, suppose that $\dist^*(u,v) \le (1+\frac{\e}{3}) \dist(u,v)$. The definition of coarse $(1+\frac \e 3)$-approximation implies $\DS.\tilde \dist(u,v) \le (1+ \frac \e 3) \cdot \dist^*(u,v)$. Combining these two inequalities, we conclude that
    \(\DS.\tilde \dist(u,v) \le \left(1+\frac \e 3 \right) \left(1+\frac \e 3\right) \dist(u,v) \le (1+\e) \dist(u,v),\)
    contradicting the Case 2 assumption.
\end{proof}
Theorem~\ref{thm:main} follows from Lemmas~\ref{lem:fast-invariants},\ref{lem:inv-implies-stretch}, and \ref{lem:inv-implies-light} (after rescaling $\e \gets \e / 3$).

\subsection{The procedure \textsc{UpdateDistEstimates}}
\label{SS:update-dists}
It remains to describe the $\textsc{UpdateDistEstimates}(x,i)$ procedure and prove Lemma~\ref{lem:dist-correct}. This procedure recomputes the distance estimates $\DS.\tilde \dist^*(\cdot, \cdot)$ for the scale-$i$ edges of $\DS.S_{\rm small}$.
It is described in pseudocode below, and summarized in text immediately after.

\begin{figure}[h!]
\centering
\begin{algorithm}
    \ul{$\texttt{DS}.\textsc{UpdateDistEstimates}(x,i)$:}\+
    \\      for every scale-$(i-2)$ edge $(u,v)$ of $\DS.S_{\rm small}$ with $u,v \in B(x, 4 \cdot 2^i)$:\+
    \\  $\DS.\tilde \dist_L(u,v) \gets \DS.\textsc{Estimate}(u,v,i)$\-
    \\      for every scale-$i$ edge $(u,v)$ of $\DS.S_{\rm small}$ with $u,v \in B(x, 4 \cdot 2^i)$:\+
    \\  $\DS.\tilde \dist^*(u,v) \gets \DS.\textsc{Estimate}(u,v,i)$\-\-
    \\
    \\ \ul{$\texttt{DS}.\textsc{Estimate}(u,v,i)$:}\+
    \\      \Comment{Returns an estimate of the distance $\dist_{\light [i]}(u,v)$}
    \\      $i' \gets \max \set{0, \log (\e_{\rm small} 2^i)}$
    \\      $H \gets$ graph with vertex set $N_{i'} \cap B(x, 7 \cdot 2^i)$
    \\      for every pair $(u',v')$ in $V(H)$ with $2^{i - 3} \le \dist(u',v') < 2^{i-1}$:\+
    \\          \Comment{Add all edges with scale $i -1$ or $i - 2$}
    \\          if $(u',v') \in \light$, then add edge $(u',v')$ to $H$ with weight $\dist(u',v')$\-
    \\      for every pair $(u',v')$ in $V(H)$ with $\dist(u',v') < 2^{i - 4}$:\+
    \\      \Comment{Estimate distances using scale $\le i -3$ edges}
    \\          add edge $(u',v')$ to $H$ with weight $\DS.\tilde \dist_L(u',v')$\-
    \\      return $\dist_H(u,v)$

\end{algorithm}
\label{fig:update-dists}
\end{figure}

The key subroutine is a procedure $\textsc{Estimate}(u,v,i)$ which estimates $\dist_{\light [i]}(u,v)$---that is, distances using edges of $\light$ with scale strictly smaller than $i$---under the assumption that the data structures stores accurate estimates $\DS.\tilde \dist_{L}(u',v')$ for $\dist_{\light [i]}$ for every pair of $(u',v')$ at scale $\le i - 3$.
To do this, the estimation procedure constructs a constant-size \EMPH{sketch graph} $H$ and returns the distance between $u$ and $v$ in $H$. The sketch graph $H$ contains every scale-$(i-1)$ and scale-$(i-2)$ edge in $\light$ that is close to $u$ or $v$. Additionally, (roughly speaking) $H$ contains edges $(u',v')$ between the endpoints of different large-scale edges\footnote{in practice, we also add edges between some net points which are not endpoints of a large-scale edge}, which are weighted using the estimates $\DS.\tilde \dist_L(u',v')$, for pairs $(u',v')$ at scale $\le i - 3$. These two types of edges are sufficient to preserve the shortest path between $u$ and $v$ in $\light [i]$.
We first observe that $\DS.\textsc{UpdateDistEstimates}$ runs in $(\e^{-1} \cdot \log \Phi)^{O(\ddim)}$ time.
\begin{proof}[of Lemma~\ref{lem:dist-fast}]
    Each call to $\DS.\textsc{Estimate}$ runs in time $(\e^{-1} \cdot \log \Phi)^{O(\ddim)}$: indeed, it follows from the packing bound (Observation~\ref{obs:packing}) that $H$ has at most $(\e^{-1} \cdot\log \Phi)^{O(\ddim)}$ vertices, and it follows from Corollary~\ref{cor:expanded-fast-ball} that these vertices can be found in $\e^{-O(\ddim)}$ time. Overall, $\DS.\textsc{UpdateDistEstimates}$ makes $\e^{-O(\ddim)} \cdot \log \Phi$ calls to $\DS.\textsc{Estimate}$.
\end{proof}

We now prove Lemma~\ref{lem:dist-correct}, that is, correctness of the algorithm.
%
Our proof is by induction on $i$. Given a scale $i$, we (inductively) assume that Lemma~\ref{lem:dist-correct} holds for all scales $\le i-1$; in particular, we assume that for every edge $(u',v')$ of $\sparse_{\rm small}$ with scale $\le i -3$, the value $\DS.\tilde \dist_L(u',v')$ is a $(1 + \kappa \cdot(i-3) \cdot\e_{\rm small})$-approximation for $\dist_L(u'v')$. We also assume that Invariants~\ref{inv:stretch} and \ref{inv:light} hold for all edges of $\sparse$ with scale $\le i-1$.
Throughout this section, we write $\cN$, $S$, $S_{\rm small}$, and $L$ instead of $\net$, $\sparse$, $\sparse_{\rm small}$ and $\light$ (after the insertion/deletion procedure terminates).
Our key claim is that $\DS.\textsc{Estimate}(u,v,i)$ returns a coarse approximation of $\dist_{L[i]}(u,v)$. We begin with a simple observation that $\dist_{L}(u',v') = \dist_{L[i]}(u',v')$, for pairs of vertices $(u',v')$ at scale $\le i - 2$.

\begin{observation}
\label{obs:small-L}
    Let $a$ and $b$ be two vertices with $\dist(a,b) \le 2^j$ for some scale $j \le i - 2$. 
    We have that $\dist_L(a,b) < 2 \cdot \dist(a,b)$. We conclude that that the shortest path between $a$ and $b$ in $L$ uses edge with scale at most $j+1$; that is, $\dist_{L}(a,b) = \dist_{L[j+2]}(u',v')$.
\end{observation}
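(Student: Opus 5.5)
The plan is to repeat the argument of Lemma~\ref{lem:inv-implies-stretch}, restricted to small scales. This is necessary because, at the point where the observation is used, Invariants~\ref{inv:stretch} and \ref{inv:light} are only assumed for edges of $S$ with scale at most $i-1$. (I read the final equality as $\dist_L(a,b) = \dist_{L[j+2]}(a,b)$; the $u',v'$ appears to be a typo.)

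\textbf{Step 1: a short path in $S$.} By Lemma~\ref{lem:net-spanner}, $S$ is a $(1+\e)$-spanner, so there is a path $P_S$ in $S$ from $a$ to $b$ of length at most $(1+\e)\dist(a,b)$. Every edge of $P_S$ has length at most $(1+\e)\dist(a,b) \le (1+\e)2^j < 2^{j+1}$. Hence every edge of $P_S$ has scale at most $j+1 \le i-1$, so all of them satisfy Invariant~\ref{inv:stretch}.

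\textbf{Step 2: replace each edge of $P_S$ by a path in $L$.} Take an edge $(u,v)$ of $P_S$ with scale $s \le j+1$. If $(u,v)\in L$, it is itself a path in $L[s+1]$ of length $\dist(u,v)$. Otherwise, Invariant~\ref{inv:stretch} gives $\dist_{L[s]}(u,v)=\dist^*(u,v) \le (1+\e)\dist(u,v)$. In both cases we get a path in $L[j+2]$ of length at most $(1+\e)\dist(u,v)$. Concatenating these paths yields
\[ \dist_{L[j+2]}(a,b) \le (1+\e)^2\dist(a,b) < 2\dist(a,b), \]
using that $\e$ is small. Since $L[j+2]\subseteq L$, this gives the first claim $\dist_L(a,b)<2\dist(a,b)$.

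\textbf{Step 3: the shortest path uses only small scales.} Let $Q$ be a shortest $a$--$b$ path in $L$. Its length is $\dist_L(a,b) < 2\dist(a,b) \le 2^{j+1}$. Each edge of $Q$ has length at most the length of $Q$, so each edge has length $<2^{j+1}$ and therefore scale at most $j+1$. Thus $Q$ lies in $L[j+2]$, and $\dist_L(a,b)=\dist_{L[j+2]}(a,b)$.

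The only delicate point is Step~2. There we must make sure we never invoke an invariant at a scale $\ge i$, since such scales are not yet certified during the update. This is exactly why the hypothesis $j\le i-2$ is needed: it guarantees every edge of $P_S$ has scale at most $j+1\le i-1$.
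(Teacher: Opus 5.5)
Your proof is correct, and your reading of the typo (the intended equality is $\dist_L(a,b)=\dist_{L[j+2]}(a,b)$) is right. The paper reaches the same conclusion by a slightly different route.

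\textbf{The paper's route.} It forms the auxiliary graph $L^+$, which is $L$ together with all edges of $S$ of scale at least $i$. This $L^+$ satisfies Invariant~\ref{inv:stretch} for every edge of $S$: the large-scale edges lie in $L^+$, and $L^+[s]=L[s]$ for $s\le i-1$. The paper then invokes Lemma~\ref{lem:inv-implies-stretch} as a black box to get an $a$--$b$ path in $L^+$ of length at most $(1+3\varepsilon)\dist(a,b)<2^{j+1}$. Finally, it notes that such a short path cannot contain an edge of scale $\ge i$, so the path already lies in $L$.

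\textbf{Your route.} You unroll the proof of Lemma~\ref{lem:inv-implies-stretch} along a single $(1+\varepsilon)$-path $P_S$ of $S$. All edges of $P_S$ already have scale at most $j+1\le i-1$, so Invariant~\ref{inv:stretch} applies edge by edge and lands directly in $L[j+2]$.

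\textbf{Comparison.} The paper reuses the lemma verbatim, at the cost of an auxiliary graph and a push-back step from $L^+$ to $L$. It also needs a check that $L^+$ satisfies the invariant, which it only asserts. Your argument is self-contained and never refers to scales $\ge i$. It also makes explicit exactly where the hypothesis $j\le i-2$ enters. The underlying ingredients are the same in both: Lemma~\ref{lem:net-spanner} plus Invariant~\ref{inv:stretch} at scales below $i$.
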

\begin{proof}
    By assumption, our spanner $L$ satisfies Invariant~\ref{inv:stretch} for every edge in $S$ with scale $\le i - 1$. We would like to apply Lemma~\ref{lem:inv-implies-stretch} to conclude that $\dist_L(a,b) \le (1+3\e) \dist(a,b) < 2 \dist(a,b)$. However, we cannot do this directly, because Lemma~\ref{lem:inv-implies-stretch} assume that Invariant~\ref{inv:stretch} holds for \emph{all} edges in $S$, not just those with scale $\le i -1$. This issue is easy to fix: simply consider the graph $L^+$ which is the union of $L$ and all scale-$j$ edges in $S$ with $j \ge i$. Clearly $L^+$ satisfies Lemma~\ref{lem:inv-implies-stretch}, so it is a $(1+\e)$-spanner. This means that the shortest path between $a$ and $b$ in $L^+$ only uses edges with scale strictly smaller than $j + 2 \le i$: an edge with scale $\ge i$ has length at least $2^{i-1}$, but we have $\dist_L(u,v)$ is strictly smaller than $2 \dist(u,v) < 2 \cdot 2^{j} \le 2^{i-1}$. We conclude that this path also appears in $L$.
\end{proof}

\begin{lemma}
\label{lem:estimate-correct}
Let $(u,v)$ be an edge of $S_{\rm small}$ with scale $\le i$, and $u,v \in B(x, 4 \cdot 2^i)$. Then $\DS.\textsc{Estimate}(u,v,i)$ returns a coarse $(1 + \kappa \cdot i \cdot \e_{\rm small})$-approximation for $\dist_{L[i]}(u,v)$.
\end{lemma}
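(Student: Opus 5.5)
We prove the two inequalities of a coarse approximation separately: the lower bound $\dist_{L[i]}(u,v)\le \dist_H(u,v)$, and the upper bound $\dist_H(u,v)\le(1+\kappa i\e_{\rm small})\dist_{L[i]}(u,v)$ whenever $\dist_{L[i]}(u,v)\le 2\dist(u,v)$. Throughout we use the inductive hypothesis: every scale-$\le i-3$ pair $(u',v')$ of $S_{\rm small}$ has $\DS.\tilde\dist_L(u',v')$ a $(1+\kappa(i-3)\e_{\rm small})$-approximation of $\dist_L(u',v')$. By Observation~\ref{obs:small-L}, $\dist_L(u',v')=\dist_{L[i]}(u',v')$ for such pairs. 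A preliminary step checks that every pair $(u',v')$ of $N_{i'}$ with $\dist(u',v')<2^{i-4}$ is indeed an edge of $S_{\rm small}$ at scale $\le i-3$, so that its estimate is stored. This holds since both endpoints lie in the $\e_{\rm small}2^i$-net and the net-tree spanner at scale $j$ connects $N_{\log(\e_{\rm small}2^j)}$ points within $c\cdot 2^j$.

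\emph{Lower bound.} Every edge of $H$ has weight at least the $L[i]$-distance between its endpoints. Type-one edges are actual $L$-edges of scale $i-1$ or $i-2$, hence lie in $L[i]$. Type-two edges carry weight $\tilde\dist_L(u',v')\ge\dist_L(u',v')=\dist_{L[i]}(u',v')$. Concatenating along a shortest $H$-path gives $\dist_{L[i]}(u,v)\le\dist_H(u,v)$. For the second clause of coarseness (when $\dist_{L[i]}(u,v)>2\dist(u,v)$), this same lower bound immediately gives $\dist_H(u,v)\ge 2\dist(u,v)$.

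\emph{Upper bound.} Fix a shortest path $P$ in $L[i]$ from $u$ to $v$ of length $\le 2\dist(u,v)\le 2\cdot 2^i$. Since $u\in B(x,4\cdot2^i)$, $P$ stays inside $B(x,6\cdot 2^i)\subseteq B(x,7\cdot2^i)$. Decompose $P$ into its ``long'' edges (scale $i-1$ or $i-2$, whose endpoints are in $N_{i'}$ because $L\subseteq S\subseteq$ edges between net points of the appropriate net, so they are present in $H$) and the maximal subpaths consisting of edges of scale $\le i-3$. Each maximal short subpath from $a$ to $b$ has endpoints that are either $u$, $v$, or endpoints of long edges, and all of these lie in $N_{i'}$. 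If $\dist(a,b)<2^{i-4}$, we use the direct $H$-edge of weight $\le(1+\kappa(i-3)\e_{\rm small})\dist_L(a,b)$, and $\dist_L(a,b)\le\len(P[a,b])$.

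Otherwise, the short subpath is long in total, and we cut it at net points. We walk along the subpath and greedily pick breakpoints $w_0=a,w_1,\dots$ on it at path-spacing roughly $2^{i-5}$. Each breakpoint is snapped to a nearby $N_{i'}$ point at distance $\le 2^{i'+1}=O(\e_{\rm small}2^i)$. Consecutive snapped points are at distance $<2^{i-4}$, giving $H$-edges. The added error per breakpoint is $O(\e_{\rm small}2^i)$, and there are $O(1)$ breakpoints, since the total length is $O(2^i)$ and the spacing is $\Omega(2^i)$. So the total additive error is $O(\e_{\rm small}2^i)=O(\e_{\rm small})\dist_{L[i]}(u,v)$, using $\dist_{L[i]}(u,v)\ge\dist(u,v)\ge 2^{i-1}$ for scale-$i$ edges. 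For scale-$(i-2)$ queries the same holds with a constant loss. Combining, $\dist_H(u,v)\le(1+\kappa(i-3)\e_{\rm small})\dist_{L[i]}(u,v)+C\e_{\rm small}\dist_{L[i]}(u,v)\le(1+\kappa i\e_{\rm small})\dist_{L[i]}(u,v)$, provided $3\kappa\ge C$. This is where $\kappa\ge 1024/3$ is used.

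The main obstacle is this upper bound, specifically the snapping argument. Two points need care. First, the snapped points must lie in $B(x,7\cdot2^i)$ and be within $2^{i-4}$ of each other. Second, the additive $O(\e_{\rm small}2^i)$ errors must be charged against $\dist_{L[i]}(u,v)$ rather than accumulate multiplicatively. For queries at scale $i-2$ (for $\tilde\dist_L$), $\dist(u,v)$ is only about $2^{i-3}$. There we should still get an additive error $O(\e_{\rm small}2^{i})=O(\e_{\rm small})\dist(u,v)$, with a larger constant absorbed into $\kappa$. I would verify the constants there first.
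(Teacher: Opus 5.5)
Your lower bound, and the overall shape of your upper bound, match the paper. The shared upper-bound strategy is: cut a shortest $L[i]$-path into $O(1)$ pieces, keep the scale-$(i-1)$ and scale-$(i-2)$ edges as they are, replace the remaining pieces by $H$-edges between points snapped to $N_{i'}$, and absorb the $O(\varepsilon_{\rm small}2^i)$ additive errors into the slack $3\kappa\varepsilon_{\rm small}$.

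\textbf{The gap: scale-$(i-3)$ edges.} The problem is how you cover the ``short'' subpaths. You put scale-$(i-3)$ edges, of length in $[2^{i-4},2^{i-3})$, into these subpaths. You then claim that breakpoints at path-spacing about $2^{i-5}$ give consecutive snapped points at distance $<2^{i-4}$. But breakpoints have to be vertices of $P$. So across a single scale-$(i-3)$ edge $(a',b')$, two consecutive breakpoints are $a'$ and $b'$ themselves, with $\delta(a',b')\ge 2^{i-4}$. For such a pair $H$ has no type-two edge (threshold $<2^{i-4}$) and no type-one edge (range $[2^{i-3},2^{i-1})$).

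This cannot be patched within your setup: with the threshold $2^{i-4}$ exactly as printed in \textsc{Estimate}, the statement is false. Take seven collinear points around $x$ at spacing $1.5\cdot 2^{i-4}$. Consecutive edges have scale $i-3$ and $\delta^*=\infty$, so they must be in $L$. Every other pair of scale $<i$ has $\delta^*$ equal to its length, so it must not be in $L$. Hence $L[i]$ is exactly the path of consecutive edges. The two extreme points form a scale-$i$ pair with $\delta_{L[i]}(u,v)=\delta(u,v)$, yet $H$ has no edges at all.

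\textbf{How the paper handles it.} The paper does two things differently.
\begin{itemize}
\item Its cutting rule takes $x_j$ to be the farthest vertex with $\len(P[x_{j-1}:x_j])<2^i/16$, and otherwise the next vertex. This makes every edge of length $\ge 2^{i-4}$ a piece of its own. Its endpoints are endpoints of an edge of $S$ of scale $\ge i-3$, hence already lie in $N_{i'}$, and no snapping is needed. Every other piece has snapped endpoints at distance $<2^{i-4}+O(\varepsilon_{\rm small}2^i)<2^{i-3}$.
\item It assumes $H$ contains an edge of weight $\DS.\tilde\delta_L(x'_{j-1},x'_j)$ for every pair of $V(H)$ at distance $<2^{i-3}$. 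So the type-two threshold in \textsc{Estimate} must be read as $2^{i-3}$. This matches the pseudocode comment ``scale $\le i-3$'' and the check $\delta(x'_{j-1},x'_j)<2^{i-3}$ in the paper's proof.
\end{itemize}
Move your preliminary check to $2^{i-3}$, and treat any edge longer than your spacing as its own segment. Your argument then becomes essentially the paper's.

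Your worry about scale-$(i-2)$ queries is legitimate, but it is not where you differ from the paper: the paper's proof only writes out the scale-$i$ case.
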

\begin{proof}
    Let $H$ be the sketch graph constructed by $\DS.\textsc{Estimate}(u,v,i)$.
    
    \paragraph{Lower bound.} First we prove $\dist_{L[i]}(u,v) \le \dist_H(u,v)$. It suffices to show that, for any edge $(u',v')$ in $H$ with weight $w$, we have $\dist_{L[i]}(u,v) \le w$. There are two types of edges that get added to $H$. In the first case, $(u',v')$ is an edge of $\light$ with scale either $i-2$ or $i-1$, and it belongs to $\light$; in this case, we have $w = \dist(u',v') = \dist_{L[i]}(u,v)$.
    In the second case, $(u',v')$ has scale at most $i-3$, and $w = \DS.\tilde \dist_L(u',v')$. By Observation~\ref{obs:small-L} and induction hypothesis, we have $\dist_{L[i]}(u,v) = \dist_{\light}(u,v) \le \DS.\tilde \dist_L(u',v') = w$ as desired.

    \begin{figure}[h!]
    \centering
    \includegraphics[width=0.97\linewidth]{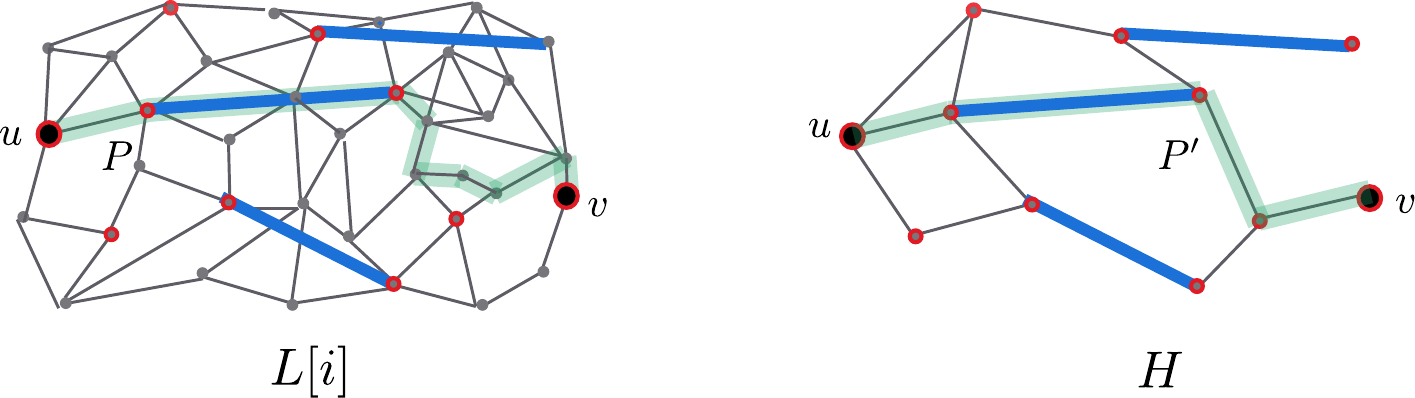}
    \caption{A depiction of the upper bound proof of Lemma~\ref{lem:estimate-correct}. The graph $L[i]$, and the sketch graph $H$. The net points $N_{i'}$ are outlined in red, in both $L[i]$ and $H$. The ``large-scale'' edges of $L[i]$, with scale $(i-1)$ and $(i-2)$, are drawn in blue. A pair $(u,v)$ and the shortest path $P$ between them in $L[i]$ is marked in green, and the corresponding path $P'$ in $H$ is marked in green.}
    \label{fig:distance-oracle-example}
\end{figure}

    \paragraph{Upper bound.} We now show that $\dist_H(u,v) \le (1+(i \cdot \e_{\rm small})$ if $\dist_{L[i]}(u,v) \le 2 \dist(u,v)$.\footnote{If $\dist_{L[i]}(u,v) > 2 \dist(u,v)$, then the previous paragraph implies $\dist_H(u,v) \ge 2 \dist(u,v)$, so we are
    done.
    }
    Refer to Figure~\ref{fig:distance-oracle-example}. Consider the shortest path $P$ between $u$ and $v$ in $L[i]$. We argue that we can pull back $P$ onto $H$ with only small distortion. As $\dist_{L[i]}(u,v) \le 2 \dist(u,v)$, path $P$ only contains edges with both endpoints in $B(u, 2 \cdot 2^i)$. By triangle inequality, path $P$ only contains edges with both endpoints in $B(x, 6 \cdot 2^i)$. 
We divide $P$ into \EMPH{subpaths} as follows. Treat $P$ as a directed path starting at $u$ and ending at $v$. Let $\EMPH{$x_0$} \gets u$. For every $j > 0$, we define \EMPH{$x_j$} to be the farthest point along $P$ such that the subpath $P[x_{j-1}:x_j]$ has length smaller than $2^i / 16$; if no such $x_j$ exists (because the edge following $x_{j-1}$ has length at least $2^i / 16$) we set $x_j$ to be the vertex that comes immediately after $x_{j-1}$ in $P$.  We define the subpath $\EMPH{$P_j$} \coloneqq P[x_{j-1}:x_j]$
We write $P$ as the concatenation of subpaths
\[P = P_1 \circ P_2 \circ \ldots \circ P_k.\]
Observe that every subpath $P_j$ either consists of a single edge of $L[i]$ with scale $\ge i - 3$, or it is a path in $L[i-3]$ (ie, it consists of edges of $L$ with scale $\le i-4$). Moreover, there are at most $k = 64$ subpaths in this decomposition, as (by construction) any two consecutive subpaths together have length $\ge 2^i / 16$, and $P$ has length at most $2 \cdot \dist(u,v) \le 2 \cdot 2^i$.

We now construct a path $P'$ in $H$ between $u$ and $v$, using path $P$. Intuitively, we will argue that each subpath $P_j$ can be pulled back to an edge in $H$; this forms the path $P'$. Let $i' = \max(0, \log(\e_{\rm small}\cdot 2^i))$, and let $N_{i'} \in \net$ be the net at scale $i'$. Observe $N_{i'} \cap B(x, 7\cdot 2^i)$ are the vertices of $H$. For every point $x_j$, we define the point \EMPH{$x_j'$} to be the closest point in $N_{i'}$. By definition of net, we have $\dist(x_j,x_j') \le \e_{\rm small} 2^i < 2^i$, and so triangle inequality implies that $x_j' \in B(x, 7 \cdot 2^i)$. That is, each $x_j'$ is a vertex in the sketch graph $H$. Also note that $x_0 = x_0' = u$ and $x_k = x_k' = v$: indeed, $u$ and $v$ are in $N_{i'}$, because $(u,v)$ is a scale-$i$ edge of $S_{\rm small}$.  Finally, we define \EMPH{$P_j'$} to be the shortest path in $H$ between $x_{j-1}'$ and $x_j'$. Define
\[\EMPH{$P'$} = P_1' \circ P_2' \circ \ldots \circ P_k'.\]
Observe that $P'$ is a path in $H$ between $u$ and $v$. We prove below that for every subpath $P_j'$,
\begin{equation}
\label{eq:subpath-dist}
    \dist_H(x_{j-1}',x_j') \le (1+\kappa\cdot(i-3)\cdot\e_{\rm small})\cdot \len(P_j) + 8 \cdot \e_{\rm small} \cdot 2^i.
\end{equation}
Summed over the 64 subpaths of $P$, Equation~\ref{eq:subpath-dist} implies that there is a path in $H$ between $u$ and $v$ with length at most 
\((1+\kappa \cdot (i-3) \cdot \e_{\rm small})\dist_{L[i]}(u,v) + 512 \cdot \e_{\rm small} 2^i.\)
As $\dist_{L[i]}(u,v) \ge \dist(u,v) \ge 2^i/2$ and $\kappa \ge \frac {1024} 3$ is sufficiently large, we conclude \(\dist_H(u,v) \le (1+\kappa \cdot i \cdot \e_{\rm small}) \dist_{L[i]}(u,v)\).

\medskip \noindent \textbf{Proof of Equation~\ref{eq:subpath-dist}.} There are two cases. In the first case, suppose that $P_j$ consists of a single edge in $L[i]$ with scale $(i-1)$ or $(i-2).$ In this case, the endpoints $x_{j-1}$ and $x_j$ of $P_j$ are in the net $N_{i'}$, and so $x_{j-1} = x_{j-1}'$ and $x_{j} = x_{j}'$. 
The graph $H$ includes the edge $(x_{j-1}',x_j')$ with weight equal to $\len(P_j)$, as desired.

In the second case, suppose that that $P_j$ is either a scale-$(i-3)$ edge \emph{or} a path in $L[i-3]$. 
First we show that $(x_{j-1}', x_j')$ is an edge in the $\e_{\rm small}$-net-tree spanner $S_{\rm small}$ with scale $\le i-3$.
As $x_{j-1}',x_j' \in N_{i'}$, the definition of $S_{\rm small}$ implies that it suffices to show that $\dist(x_{j-1}', x_j') < 2^{i-3}$.
To this end, note that if $P_j$ is a scale-$(i-3)$ edge we have $x_{j-1}' = x_{j-1}$ and $x_{j}'=x_j$, and so $\dist(x_{j-1}', x_{j}') = \dist(x_{j-1}, x_j) < 2^{i-3}$. Otherwise, suppose $P_j$ is a path in $P[i-3]$. Observe that the definition of $N_{i'}$ implies $\dist(x_{j},x_{j}') \le \e_{\rm small} \cdot 2^i$, and similarly $\dist(x_{j-1},x_{j-1}') \le \e_{\rm small} \cdot 2^i$. Triangle inequality implies
\[\dist(x_{j-1}', x_j') \le \dist(x_{j-1},x_j) + 2 \e_{\rm small}\cdot 2^i < \frac{2^i}{16} + 2 \e_{\rm small}\cdot 2^i < 2^{i-3}.\]
Thus, the graph $H$ contains an edge $(x_{j-1}', x_j')$ with weight $\DS.\tilde \dist_L(x_{j-1}',x_j')$. By induction hypothesis and Observation~\ref{obs:small-L}, the value $\DS.\tilde \dist_L(x_{j-1}',x_j')$ is a $(1+\kappa\cdot(i-3)\cdot\e_{\rm small})$-approximation for $\dist_{L[i-3]}(x_{j-1}', x_j')$.
Now we claim that
\begin{equation}
\label{eq:subpath-len}
    \dist_{L[i-3]}(x_{j-1}', x_j') \le \len(P_j) + 4 \e_{\rm small} \cdot 2^i.
\end{equation}
Indeed,
$\dist(x_j, x_j') < 2^{i-5}$, so Observation~\ref{obs:small-L} implies we have $\dist_{L[i-3]}(x_{j},x_{j}') \le 2 \e_{\rm small}\cdot 2^i$. Similarly $\dist_{L[i-3]}(x_{j-1},x_{j-1}') \le 2 \e_{\rm small} \cdot 2^i$. Triangle inequality proves Equation~\ref{eq:subpath-len}.
Finally, we conclude
\(
  \dist_H(x'_{j-1},x'_j) \le (1+\kappa \cdot (i-3) \cdot \e_{\rm small})\cdot(\len(P_j + 4 \e_{\rm small} \cdot 2^i) 
  <(1+\kappa \cdot (i-3) \cdot \e_{\rm small})\cdot \len(P_j) + 8\e_{\rm small} \cdot 2^i.
  \)
This proves Equation \ref{eq:subpath-dist}, and completes the proof of Lemma~\ref{lem:estimate-correct}.
\end{proof}
We are now ready to prove Lemma~\ref{lem:dist-correct}.
\begin{proof}[Proof of Lemma~\ref{lem:dist-correct}]
Let $S_{\rm small}^{\rm old}$ and $L^{\rm old}$ denote $\sparse_{\rm small}$ and $\light$ (respectively) before the insertion/deletion process.
    Let $(u,v)$ be a scale-$i$ edge of $S_{\rm small}$. We claim that value $\DS.\tilde \dist^*(u,v)$ is a coarse $(1 + \kappa \cdot i \cdot \e_{\rm small})$-approximation of $\dist^*(u,v)$. Indeed, if $u,v \in B(x, 4 \cdot 2^i)$ then Lemma~\ref{lem:estimate-correct} implies that the data structure stores a coarse approximation. On the other hand, if one endpoint is not in $B(x, 4 \cdot 2^i)$, then Lemma~\ref{lem:dist-unchanged} implies\footnote{Note that Lemma~\ref{lem:dist-unchanged} is stated with  $\DS.\textsc{Insert}$ and $\DS.\textsc{Delete}$ instead of $\DS.\textsc{InsertFast}$ and $\DS.\textsc{DeleteFast}$, and it refers to edges of $\sparse$ instead of $\sparse_{\rm small}$. But the statement still holds even with these changes.} that $(u,v)$ is an edge in $S^{\rm old}_{\rm small}$, and that either $\dist^*(u,v) = \dist_{L^{\rm old}[i]}(u,v)$ or both $\dist^*(u,v)$ and $\dist_{L^{\rm old}[i]}(u,v)$ are larger than $2 \dist(u,v)$; thus, any coarse $\alpha$-approximation for $\dist_{L^{\rm old}[i]}(u,v)$ is a coarse $\alpha$-approximation for $\dist_{L[i]}(u,v) = \dist^*(u,v)$. By assumption, before insertion/deletion we had stored a value $\DS.\tilde \dist^*(u,v)$ which is a $(1 + \kappa \cdot i \cdot \e_{\rm small})$-approximation for $\dist_{L^{\rm}[i]}(u,v)$, so we are done.

    Let $(u,v)$ be a scale-$(i-2)$ edge of $S_{\rm small}$. By an identical argument, $\DS.\tilde \dist_L(u,v)$ is a coarse $(1 + \kappa \cdot i \cdot \e_{\rm small})$-approximation of $\dist_L(u,v)$. By Observation~\ref{obs:small-L}, we have $\dist_L(u,v) < 2 \dist(u,v)$, and so $\DS.\tilde \dist_L(u,v)$ is a (non-coarse) $(1 + \kappa \cdot i \cdot \e_{\rm small})$-approximation of $\dist_L(u,v)$.
\end{proof}

\begin{remark}[Coping with an unknown aspect ratio.]
\label{rem:aspect-ratio}
    While we have written our data structure as though the aspect ratio $\Phi$ is known in advance, it is possible to remove this assumption by following \cite{gao2004deformable}. Rather than explicitly maintaining the set of points in each net $N_i$ for every scale $i \in [\log \Phi]$, \cite{gao2004deformable} implicitly maintain a net for every integer scale $i \in \Z$. 
    Specifically, they maintain a value $i_{\max}$ (such that for all $i \ge i_{\max}$, the net $N_i$ consists of a single point) and a value $i_{\min}$ (such that for all $i \le i_{\min}$, 
    the net $N_i$ consists of the entire point set $X$) with $i_{\max} - i_{\rm min} = O(\log \Phi)$; for scales $i_{\min} \le i \le i_{\max}$, for each point $x$ in $N_i$, they store a list of points in $N_i$ within distance $O(2^i)$ of $x$. \cite{gao2004deformable} show that this implicit net tree can be maintained efficiently.
    In our algorithms for the dynamic spanner, we could remove the assumption that we know $\Phi$ by using the implicit net tree, and replacing the for-loop in $\DS.\textsc{Recompute}$ with a loop from $i\gets i_{\min}$ up to $i_{\max}$. We implicitly maintain values $\DS.\tilde \dist^*(u,v)$ and $\DS.\tilde \dist_L(u,v)$ for every edge $(u,v)$ and every scale $i$ (for scales $i > i_{\max}$ there is only a single point in the net so this is trivial, and for scale $i < i_{\min}$ there are no scale-$i$ edges).
\end{remark}

\section{Conclusion and Open Questions}
We have designed the first dynamic light spanner for point sets in Euclidean and doubling metrics, with update time $(\log \Phi)^{O(\ddim)}$. Our work leaves open two interesting open questions. First, is it possible to maintain a dynamic light spanner with recourse or time bounds \emph{independent} of the aspect ratio $\Phi$ and instead depends only on the number of points $n = |X|$? Second, can we obtain a runtime bound where the exponent of $\log \Phi$ (or $\log n$) is independent of $\ddim$? Ideally, we would like to handle updates in time $\e^{-O(\ddim)} \cdot \log n$, which would match the best runtime for dynamic sparse spanners.

{
\small
\bibliographystyle{alpha}
\bibliography{main}
}

 \appendix   

\section{LSO-Based Spanners Are Not Light}\label{sec:LSO}
Roughly speaking, given a metric space $(X,\delta)$, an $(\tau,\varepsilon)$-LSO is a collection $\Sigma$ of $\tau$ orderings over $X$ such that every pair of points $x,y\in X$ is almost consequence in one of the orderings (see Definition~\ref{def:LSO}).
Chan \etal~\cite{chan2020locality} constructed $(\tilde{O}(\eps)^{-d},\eps)$-LSO for the Euclidean space $\R^d$ (see also \cite{GaoH24}).
Later, Filtser and Le \cite{filtser2022locality} constructed  $(\eps^{-O(d)},\eps)$-LSO for metrics with doubling dimension $d$.
Given a $(\tau,\eps)$-LSO, one can construct a spanner by connecting any two points which are consecutive in some ordering \cite{chan2020locality}. The result is a $(1+O(\eps))$-spanner with $(n-1)\cdot \tau$ edges. 
Chan~\etal~showed that it is possible to efficiently maintain the LSO for $\R^d$, and thus it become very easy to dynamically maintain a sparse spanner.
Later,  Le and La~\cite{la2025dynamic} constructed a dynamic version of the LSO for doubling metrics, thus allowing us to easily maintain dynamic spanner.
Unfortunately, the resulting spanner is by no means light. Indeed, in Lemma~\ref{lem:LSO}  we show that it has logarithmic lightness (even for $\R^1$).

We begin with a formal definition of an LSO (even though the proof of Lemma~\ref{lem:LSO} will go though also using slightly weaker definition presented above).
\begin{definition}[$(\tau,\rho)$-LSO]\label{def:LSO}
    Given a metric space $(X,d_{X})$, we say that a collection $\Sigma$
	of orderings is a $(\tau,\rho)$-LSO if
	$\left|\Sigma\right|\le\tau$, and for every $x,y\in X$, there is
	a linear ordering $\sigma\in\Sigma$ such that (w.l.o.g.\footnote{That is either  $x\preceq_{\sigma}y$ or  $y\preceq_{\sigma}x$, and the guarantee holds w.r.t. all the points between $x$ and $y$ in the order $\sigma$.\label{foot:wlogOrder}}) $x\preceq_{\sigma}y$ and the points between $x$ and $y$ w.r.t. $\sigma$ could be partitioned
	into two consecutive intervals $I_{x},I_{y}$ where $I_{x}\subseteq B_{X}(x,\rho\cdot d_{X}(x,y))$	and $I_{y}\subseteq B_{X}(y,\rho\cdot d_{X}(x,y))$. $\rho$ is called the \emph{stretch} parameter. 
\end{definition}
In Lemma~\ref{lem:LSO} below we prove that LSO-based spanner has logarithmic lightness even for the path graph (which is Euclidean, and has constant doubling dimension).
\begin{lemma}\label{lem:LSO}
Consider a $(\tau,\eps)$-LSO for the set of points $\left(v_1=1,v_2=2,\dots,v_{2^\Phi}=2^\Phi\right)$, 
for some $\eps\in(0,\frac{1}{6})$. Let $H_{{\rm LSO}}$ be the $(1+O(\eps))$-spanner
created by connecting every two points that are consecutive in some
ordering. Then $w(H_{{\rm LSO}})=\Omega(\Phi\cdot2^{\Phi})$. In particular,
$H_{{\rm LSO}}$ has logarithmic lightness.
\end{lemma}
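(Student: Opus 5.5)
We reduce the claim to a counting statement, scale by scale: at every scale $j$ with $1\le j\le \Phi-O(1)$, the spanner $H_{\rm LSO}$ contains $\Omega(2^{\Phi})$ total weight coming from edges of length $\Theta(2^j)$. Summing over $\Theta(\Phi)$ scales then gives $w(H_{\rm LSO})=\Omega(\Phi\cdot 2^{\Phi})$. Since the MST of the path has weight $2^\Phi-1$, this is lightness $\Omega(\Phi)=\Omega(\log n)$.

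Fix a scale $j$ and partition $\{1,\dots,2^\Phi\}$ into consecutive blocks of length $2^j$, giving $2^{\Phi-j}$ blocks $B_1,B_2,\dots$. Consider the pairs $(x,y)$ with $x$ the right endpoint of $B_{2t-1}$ region's middle and $y$ in the next block. More concretely, take $x_t$ as the midpoint of block $B_{2t}$ and $y_t$ as the midpoint of block $B_{2t+1}$, so $\delta(x_t,y_t)=2^j$. By the LSO property, some ordering $\sigma\in\Sigma$ has all points strictly between $x_t$ and $y_t$ lying in $B(x_t,\eps 2^j)\cup B(y_t,\eps 2^j)$, split into a prefix near $x_t$ and a suffix near $y_t$.

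Walking along $\sigma$ from $x_t$ to $y_t$, the consecutive pairs form a path in $H_{\rm LSO}$. At the transition point between $I_{x_t}$ and $I_{y_t}$ (or at $x_t\to y_t$ directly), there is one $\sigma$-consecutive pair $(a,b)$ with $a\in B(x_t,\eps 2^j)$ and $b\in B(y_t,\eps2^j)$. This pair is an edge of $H_{\rm LSO}$ of length at least $(1-2\eps)2^j\ge 2^j/2$ and at most $(1+2\eps)2^j$. Call this edge $e_t$.

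The key step is to ensure that different $t$ yield distinct edges, so their weights do not collapse. Since $\eps<1/6$, the balls $B(x_t,\eps2^j)$ lie inside block $B_{2t}$ and are pairwise disjoint for distinct $t$. Hence the endpoint $a$ of $e_t$ determines $t$, and the edges $e_t$ are distinct. This gives $2^{\Phi-j-1}$ distinct edges, each of length $\ge 2^{j-1}$, so scale $j$ contributes $\Omega(2^\Phi)$ weight.

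To avoid double-counting across scales, note that edge lengths at scale $j$ lie in $[2^j/2,\,(1+2\eps)2^j]\subset[2^{j-1},2^{j+1})$. Any single edge is therefore counted for at most $O(1)$ values of $j$; restricting to every other $j$ makes the scale sets disjoint. Summing over the $\Theta(\Phi)$ scales gives the bound.

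The main obstacle is the extraction of the single long consecutive pair from the LSO guarantee, in particular the edge case where one of the intervals $I_x,I_y$ is empty. This is handled by the fact that the sequence $x,I_x,I_y,y$ must at some step cross from the $x$-side ball to the $y$-side ball, and these two balls are disjoint. The other point needing care is the boundary near $2^\Phi$, which only loses constants.
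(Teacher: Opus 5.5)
Your proposal is correct and follows essentially the paper's argument: at each scale you extract from the LSO a $\sigma$-consecutive edge of length in $[(1-2\varepsilon)2^j,(1+2\varepsilon)2^j]$ near each pair at distance $2^j$, show these edges are distinct within a scale via disjoint $\varepsilon 2^j$-balls and across scales via their lengths, and sum $\Omega(2^{\Phi})$ over $\Theta(\Phi)$ scales. The differences are cosmetic. The paper uses all consecutive net pairs $(q2^i,(q+1)2^i)$ and observes that $\varepsilon<1/6$ already makes the length ranges of different scales disjoint, so your restrictions to alternating blocks and to every other scale are unnecessary, though they cost only constant factors.
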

\begin{proof}
For $i\in[0,\Phi]$, let $N_{i}=\left\{ v_{j}\mid j\,\mod\,2^{i}=0\right\} $
be a hierarchical net. For any pair of consecutive net points in $N_{i}$,
$x=v_{q\cdot2^{i}}$ and $y=v_{(q+1)\cdot2^{i}}$ there is some ordering
$\sigma$ such that $x$ and $y$ are almost consecutive in $\sigma$.
In particular, there are points $x',y'$ which are consecutive along
$\sigma$ such that $\delta(x,x')\le\eps\cdot\delta(x,y)=\eps\cdot2^{i}$,
and $\delta(y,y')\le\eps\cdot2^{i}$. Denote this edge by $e_{i,q}$.
$e_{i,q}$ is part of the spanner resulting from the LSO, and $w(e_{i,q})\in\left[1-2\eps,1+2\eps\right]\cdot2^{i}$. 

We argue that for every $(i,q)\ne(i',q')$ it holds that $e_{i,q}\ne e_{i',q'}$.
For the sake of contradiction, suppose otherwise. We continue by case
analysis:
\begin{itemize}
\item $i\ne i'$. Suppose w.l.o.g. that $i<i'$. Then $w(e_{i,q})\le(1+2\eps)\cdot2^{i}<(1-2\eps)\cdot2^{i'}\le w(e_{i',q'})$,
a contradiction.
\item $i=i'$. Suppose w.l.o.g. that $q<q'$. The edge $e_{i,q}$ has its
left endpoint at distance at most $\eps\cdot2^{i}$ from $v_{q\cdot2^{i}}$.
The edge $e_{i',q'}$ has its right endpoint at distance at most $\eps\cdot2^{i}$
from $v_{(q'+1)\cdot2^{i}}$. It follows that 
\begin{align*}
w(e_{i,q}) & \ge\left|(q'+1)\cdot2^{i}-q\cdot2^{i}\right|-2\cdot\eps\cdot2^{i}\\
 & \ge2\cdot2^{i}-2\cdot\eps\cdot2^{i}=(1-\eps)\cdot2^{i+1}>(1+2\eps)\cdot2^{i}\,,
\end{align*}
a contradiction.\aftermath
\end{itemize}
\end{proof}

\end{document}